\renewcommand{\baselinestretch}{\vv}
\newtheorem{theorem}{Theorem}
\newtheorem{lemma}{Lemma}
\newtheorem{corollary}{Corollary}[section]
\title{Parallelizing MCMC via Weierstrass Sampler}
\author{Xiangyu Wang and David B. Dunson
}
\date{\today}
\begin{document}
\maketitle
\begin{abstract} 
With the rapidly growing scales of statistical problems, subset based communication-free parallel MCMC methods are a promising future for large scale Bayesian analysis. In this article, we propose a new Weierstrass sampler for parallel MCMC based on independent subsets. The new sampler approximates the full data posterior samples via combining the posterior draws from independent subset MCMC chains, and thus enjoys a higher computational efficiency. We show that the approximation error for the Weierstrass sampler is bounded by some tuning parameters and provide suggestions for choice of the values. Simulation study shows the Weierstrass sampler is very competitive compared to other methods for combining MCMC chains generated for subsets, including averaging and kernel smoothing.

\end{abstract}
{\it Keywords}: Big data; Communication-free; Embarassingly parallel; MCMC; Scalable Bayes; Subset sampling; Weierstrass transform.

\section{Introduction}
The explosion in the collection and interest in big data in recent years has brought new challenges to modern statistics.  Bayesian analysis, which has benefited from the ease and generality of sampling-based inference, is now suffering as a result of the huge computational demand of posterior sampling algorithms, such as Markov chain Monte Carlo (MCMC), in large scale settings.  MCMC faces several bottlenecks in big data problems due to the increasing expense in likelihood calculations, to the need for updating latent variables specific to each sampling unit, and to increasing mixing problems as the posterior becomes more concentrated in large samples.
\par
To accelerate computation, efforts have been focused in three main directions.  The first is to parallelize computation of the likelihood at each MCMC iteration
\citep{Agarwal:Duchi:2012, Smola:Naray:2010}. Under conditional independence, calculations can be conducted separately for mini batches of data stored on different machines, with results fed back to a central processor.  This approach requires communication within each iteration, which limits overall speed.  A second strategy focuses on accelerating expensive gradient calculations in Langevin and Hamiltonian Monte Carlo \citep{Neal:2010} using stochastic approximation based on random mini batches of data \citep{Welling:Teh:2011, Ahn:etal:2012}.  A third direction is motivated by the independent product equation \eqref{eq:IPE} from \cite{Bailer:Jones:2011}. The data are partitioned into mini batches, MCMC is run independently for each batch without communication, and the chains are then combined to mimic draws from the full data posterior distribution.  By running independent MCMC chains, this approach bypasses communication costs until the combining step and increases MCMC mixing rate, as the subset posteriors are based on smaller sample sizes and hence effectively annealed.  The main open question for this approach is how to combine the chains to obtain an accurate approximation? 
\par

To address this question, \cite{Scott:Blocker:Bonassi:2013} suggested to use averaged posterior draws to approximate the true posterior samples. \cite{Neis:etal:2013} and \cite{White:etal:2013} instead make use of kernel density estimation to approximate the subset posterior densities, and then approximate the true posterior density following \eqref{eq:IPE}.  Based on our experiments, these methods have adequate performance only in specialized settings, and exhibit poor performance in many other cases, such as when parameter dimensionality increases and large sample Gaussian approximations are inadequate.  We propose a new combining method based on the independent product equation \eqref{eq:IPE}, which attempts to address these problems.

In Section 2, we first describe problems that arise in using \eqref{eq:IPE} in a naive way, briefly state the motivation for our method, and provide theoretic justification for the approximation error.  In Section 3 we then describe specific algorithms and discuss tuning parameter choice.  Section 4 assesses the method using extensive examples.  Section 5 contains a discussion, and proofs are included in the appendix.

\section{Motivation}
The fundamental idea for the new method is as follows.  For a parametric model $p(X|\theta)$, assume the data contain $n$ conditionally independent observations, which are partitioned into $m$ non-overlapping subsets, $X = \{X_1,\cdot\cdot\cdot, X_m\}$.  The following relationship holds between the posterior distribution for the full data set and posterior distributions for each subset, 
\begin{align*}
  p(\theta|X) &\propto p(X|\theta)p(\theta) = \bigg\{ \prod_{i=1}^m p(X_i|\theta)\bigg\} p(\theta) \propto \bigg\{ \prod_{i=1}^m \frac{p(\theta|X_i)}{p_i(\theta)} \bigg\} p(\theta)\\
& = \bigg\{ \prod_{i=1}^m p(\theta|X_i) \bigg\} \bigg\{ \frac{p(\theta)}{\prod_{i=1}^m p_i(\theta)} \bigg\},
\end{align*}
where $p(\theta)$ is the prior distribution for the full data set and $p_i(\theta)$ is that for subset $i$. As we will only be obtaining draws from the subset posteriors to approximate draws from $p(\theta|X)$, we have flexibility in the choice of $p_i(\theta)$.  If we further require $p(\theta) = \prod_{i=1}^m p_i(\theta)$, the above equation can be reformulated as 
\begin{align}
  p(\theta|X) \propto p(\theta|X_1)p(\theta|X_2)\cdots p(\theta|X_m) =  \prod_{i=1}^m p(\theta|X_i), \label{eq:IPE}
\end{align}
which we refer to as the independent product equation. This equation indicates that under the independence assumption, the posterior density of the full data can be represented by the product of subset posterior densities if the subsets together form a partition of the original set. However, despite this concise relationship, sampling from the product of densities remains a difficult issue. \cite{Scott:Blocker:Bonassi:2013} use a convolution product to approximate this equation, resulting in an averaging method. 
This approximation is adequate when the subset posteriors are close to Gaussian, as is expected to hold in many parametric models for sufficiently large subset sample sizes due to the Bernstein-von Mises theorem \citep{Van:1998}.
\par
Another intuitive way to apply \eqref{eq:IPE} is to use kernel based non-parametric density estimation (kernel smoothing). Using kernel smoothing, one obtains a closed form approximation to each subset posterior, with these approximations multiplied together to approximate the full data posterior.  This idea has been implemented recently by 
 \cite{Neis:etal:2013}, but suffers from several drawbacks. 
\begin{enumerate}
\item {\em Curse of dimensionality in the number of parameters $p$}.  It is well known that kernel density estimation breaks down as $p$ increases, with the sample size (number of posterior samples in this case) needing to increase exponentially in $p$ to maintain the same level of approximation accuracy.
\item {\em Subset posterior disconnection}. Because of the product form, the performance of the approximation to the full data posterior depends on the overlapping area of the subset posteriors, which is most likely to be the tail area of a subset posterior distribution. Hence, slight deviations in light-tailed approximations to the different subset posteriors might lead to poor approximation of the full data posterior. (See the right figure in Fig.\ref{fig:0})
\item {\em Mode misspecification}. For a multimodal posterior, averaging (noticing that the component mean of the kernel smoothing method is the average of subset draws) can collapse different modes, leading to unreliable estimates.
\end{enumerate}
\par
To ameliorate these problems, we propose a different method for parallelizing MCMC. This new method, designated as the {\em Weierstrass sampler}, is motivated by the Weierstrass transform, which is related to kernel smoothing but from a different perspective.  Our approach has good performance including in cases in which large sample normality of the posterior does not hold. In the rest of the article, we will use the term $f(\theta)$ to denote general posterior distributions and $f_i(\theta)$ for subset posteriors, in order to match the typical notation used with Weierstrass transform.
\par
 The key of our Weierstrass sampler lies in the Weierstrass transform,
\begin{align*}
  W_h f(\theta) = \int_{-\infty}^\infty \frac{1}{\sqrt{2\pi}h}e^{-\frac{(\theta-t)^2}{2h^2}}f(t) dt,
\end{align*}
which was initially introduced by \cite{Weier:1885}. The original proof shows that $W_h f(\theta)$ converges to $f(\theta)$ pointwise as $h$ tends to zero. Our method approximates the subset densities via the Weierstrass transformation.  We avoid inheriting the problems of the kernel smoothing method by directly modifying the targeted sampling distribution instead of estimating it from the subset posterior draws. 
In particular, we attempt to directly sample the approximated draws from the transformed densities instead of the original subset posterior distributions. Applying the Weierstrass transform to all subset posteriors and denoting $p(\theta|X_i)$ by $f_i(\theta)$, the full set posterior can be approximated as
\begin{align*}
 \notag \prod_{i=1}^m f_i(\theta)& \approx \prod_{i=1}^m W_{h_i} f_i(\theta) = \prod_{i=1}^m \int \frac{1}{\sqrt{2\pi}h_i}e^{-\frac{(\theta-t_i)^2}{2h_i^2}}f(t_i) dt_i\\
 \propto& \int \exp\bigg\{ -\frac{(\theta-\bar t)^2}{2h_0^2}\bigg\} \exp\bigg( -\frac{\bar {t^2} - \bar t^2}{2h_0^2}\bigg) f_1(t_1)\cdots f_m(t_m) dt,
\end{align*}
where $h_0^{-2} = \sum_{i=1}^m h_i^{-2}$, $w_i = h_i^{-2}/h_0^{-2}$, $\bar{t^2} = \sum_{i=1}^m w_it_i^2$ and $\bar t = \sum_{i=1} w_it_i$. The above equation can be viewed as the marginal distribution of the random variable $\theta$, derived from its joint distribution with the random variables $t_i, i=1,2,\ldots m$ with joint density
\begin{align}
   \exp\bigg\{ -\frac{(\theta-\bar t)^2}{2h_0^2} \bigg\} \exp\bigg(-\frac{\bar {t^2} - \bar t^2}{2h_0^2}\bigg)\cdot f_1(t_1)\cdot f_2(t_2)\cdots f_m(t_m). \label{eq:gauss}
\end{align}
The original subset posteriors appear as components of this joint distribution, enabling subset-based posterior sampling.  Moreover, the conditional distribution of the target random variable $\theta$ given the subset random variables is simply Gaussian. 

\section{Preliminaries}
The original Weierstrass transformation was stated in terms of the Gaussian kernel. For flexibility, we relax this restriction by broadening the choice of kernel functions, justifying the generalizations through Lemma 1 (See appendix). Following from the previous section,  the posterior density can be approximated as
\begin{align}
 \notag \prod_{i=1}^m f_i(\theta)& \approx \prod_{i=1}^m W_{h_i}^{(\textbf{K})}f_i(\theta) = \prod_{i=1}^m \int K_{h_i}(\theta - t_i)f_i(t_i) dt_i\\
 =& \int \prod_{i=1}^m h_i^{-1}K\big\{ h_i^{-1}(\theta - t_i)\big\} f_1(t_1)f_2(t_2)\cdots f_m(t_m) dt, \label{eq:approx}
\end{align}
where the last term can be viewed as the marginal distribution of random variable $\theta$, derived from its joint distribution with the random variables $t_i, i=1,2,\cdots m$ and a joint density $\prod_{i=1}^m K_{h_i}(\theta - t_i)f_i(t_i)$. If this joint density is proper (illustrated in Theorem \ref{thm:1}), one may sample $\theta$ from this distribution, utilizing the subset samplers as components. The details will be discussed in the next section. 

This section will focus on quantifying the approximation error of \eqref{eq:approx}. The results will be stated in terms of both one-dimensional and multivariate models.  The detailed derivation is only provided in the one-dimensional case in the Appendix, as the multivariate derivation proceeds along identical lines.
\par
A H\"older $\alpha$ smooth density function (for definition see Lemma 1) is always bounded on $\mathcal{R}^p$ for $\alpha\geq 0$. Let $M = \max_{i=1,2,\cdots,m}\{f_i(\theta), x\in \mathcal{R}^p\}$ denote the maximum value of the subset posterior densities. We have the following result (for the one-dimensional case).
\begin{theorem}\label{thm:1}
  If the posterior densities and the kernel functions satisfy the condition in Lemma 1 with $\alpha\geq 2$ and $k = 2$, i.e., the posterior density is at least second-order differentiable and the kernel function has finite variance, then the distribution defined in \eqref{eq:approx} is proper and there exists a positive constant $c_0$ such that when $h^2 = \sum_{i=1}^m h_i^2\leq c_0^{-1}$, the total variation distance between the posterior distribution and the approximation follows
  \begin{align*}
    \|f - \tilde f\| = \bigg\| C^{-1}\prod_{i=1}^m f_i(\theta) -C_W^{-1} \prod_{i=1}^m W_{h_i}^{(K)} f_i(\theta) \bigg\|\leq 2r_0r_1^{-2}h^2,
  \end{align*}
where $C$ and $C_W$ are the normalizing constants, and $r_0, r_1$ are defined as
\begin{align*}
  r_0  = C^{-1}M\max_{i\in\{1,2,\cdots,m\}}\int \prod_{j\neq i}^{m} f_j(\theta)dx \qquad  r_1^2 = \frac{2M}{M_2 \int t^2K(t)}.
\end{align*}
\end{theorem}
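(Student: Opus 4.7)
The plan is to: (i) invoke Lemma~1 for the pointwise accuracy of a single Weierstrass transform, (ii) expand the $m$-fold product $\prod_i (f_i+\epsilon_i)$ and separate a leading-order perturbation from a higher-order remainder, and (iii) pass from the resulting $L^1$ control on unnormalized densities to a total variation bound after renormalization. For Step (i), Lemma~1 with $\alpha=2$ and $k=2$ supplies, via a second-order Taylor expansion of $f_i$ inside $K_{h_i}*f_i$ and the vanishing first moment of a symmetric $K$,
\begin{align*}
\sup_\theta \bigl|W_{h_i}^{(K)} f_i(\theta)-f_i(\theta)\bigr| \leq \tfrac12 M_2\, \mu_2(K)\, h_i^2 = M r_1^{-2} h_i^2,
\end{align*}
where $\mu_2(K)=\int t^2 K(t)\,dt$ and $M_2$ is the uniform second-derivative bound from Lemma~1. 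I set $g_i = W_{h_i}^{(K)} f_i$, $\epsilon_i = g_i-f_i$, $u = \prod_i f_i$, $\tilde u = \prod_i g_i$, $C=\int u$, $C_W=\int\tilde u$; since $K\geq 0$ integrates to $1$, each $g_i$ is itself a density bounded by $M$ with $\int|\epsilon_i|\leq 2$.

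For Step (ii), expanding gives $\tilde u - u = \sum_i \epsilon_i\prod_{j\neq i}f_j + R$, where $R$ collects the terms containing two or more $\epsilon$'s. Integrating the leading piece,
\begin{align*}
\int\Bigl|\sum_i \epsilon_i\prod_{j\neq i}f_j\Bigr|\,d\theta \leq \sum_i \frac{Mh_i^2}{r_1^2}\int\prod_{j\neq i}f_j\,d\theta \leq \frac{Mh^2}{r_1^2}\max_i\int\prod_{j\neq i}f_j\,d\theta = Cr_0 r_1^{-2}h^2,
\end{align*}
which is already the promised rate. For each remainder summand indexed by $S\subseteq\{1,\dots,m\}$ with $|S|\geq 2$, I would apply the sup-norm bound $|\epsilon_i|\leq Mh_i^2/r_1^2$ on each factor $i\in S$ and bound $\int\prod_{j\notin S}f_j$ by $\max_i\int\prod_{j\neq i}f_j$ (up to an extra factor of $M$) whenever $|S|<m$. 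The delicate case will be the top summand $S=\{1,\dots,m\}$, where $\int\prod_i|\epsilon_i|$ is not integrable via sup-norm bounds alone since $\int d\theta=\infty$: there one must trade one sup-norm estimate for the $L^1$ estimate $\int|\epsilon_{j_0}|\leq 2$. Each remainder summand is then controlled by a product $\prod_{i\in S}h_i^2$ times a constant depending only on $M$, $m$ and $\max_i\int\prod_{j\neq i}f_j$, so one can pick $c_0>0$ such that $h^2\leq c_0^{-1}$ forces the total remainder to be no larger than the leading term; along the way, $\int|u-\tilde u|<C$ ensures $C_W>0$ and hence that the approximation is proper.

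Step (iii) is then a standard normalization: the identity $C^{-1}u - C_W^{-1}\tilde u = C^{-1}(u-\tilde u) + (C_W-C)\tilde u/(CC_W)$ and the bound $|C-C_W|\leq \int|u-\tilde u|$ give $\int|f-\tilde f|\,d\theta\leq 2C^{-1}\int|u-\tilde u|\,d\theta$, so under the total variation convention on $\|\cdot\|$ one arrives at $\|f-\tilde f\|\leq 2r_0 r_1^{-2}h^2$. The main obstacle is the top-level remainder summand described above, together with the quantitative calibration of $c_0$ so that the sub-leading terms do not erode the clean $h^2$ rate; the multivariate case proceeds along identical lines once Lemma~1 is invoked in its multidimensional form.
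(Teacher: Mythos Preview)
Your overall strategy---multinomial expansion of $\prod_i(f_i+\epsilon_i)$, isolate the $|S|=1$ terms, then absorb the $|S|\ge 2$ remainder into the choice of $c_0$---is a valid route to the stated bound, and your Step~(iii) matches the paper's normalization argument exactly. However, one step in your sketch is wrong as written, and the paper itself takes a different decomposition.

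The problematic line is ``bound $\int\prod_{j\notin S}f_j$ by $\max_i\int\prod_{j\neq i}f_j$ (up to an extra factor of $M$).'' The inequality runs the other way: from $f_j\le M$ you obtain $\int\prod_{j\neq i}f_j\le M\int\prod_{j\neq i,i'}f_j$, i.e.\ $C_{m-1}\le M\,C_{m-2}$, so in general $C_{m-|S|}$ is \emph{larger} than $C_{m-1}$, not smaller (for $m$ standard normal densities, $C_k=(2\pi)^{-(k-1)/2}k^{-1/2}$ is decreasing in $k$). Your argument is still repairable: the remainder contribution from a subset $S$ with $|S|=s\ge 2$ carries the extra factor $(Mh^2/r_1^2)^{s-1}$ relative to the leading term, and since $C_1,\dots,C_{m-1}$ are finitely many finite constants, one can choose $c_0^{-1}$ small enough that $(Mh^2/r_1^2)^{s-1}C_{m-s}/C_{m-1}$ is uniformly small for $2\le s\le m-1$, with the $s=m$ case handled as you indicate via $\int|\epsilon_{j_0}|\le 2$. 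But $c_0$ then genuinely depends on the whole list $C_1,\dots,C_{m-1}$, not only on $C_{m-1}$, and you should say so rather than claim a single-factor-of-$M$ comparison.

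The paper avoids enumerating subsets altogether. It uses the telescoping identity
\[
\prod_{i=1}^m g_i-\prod_{i=1}^m f_i=\sum_{i=1}^m(g_i-f_i)\prod_{j<i}f_j\prod_{j>i}g_j,
\]
so every summand contains exactly one $g_i-f_i$ and a \emph{mixed} product of $f$'s and $g$'s. To control those mixed products it introduces
\[
\delta_k=\max_{I'\subset I,\ |I|=k}\Bigl\|\prod_{j\in I'}g_j\prod_{j\in I\setminus I'}f_j-\prod_{j\in I}f_j\Bigr\|_{L^1},
\]
derives the recursion $\delta_k\le M C_{k-1}h^2+M\delta_{k-1}h^2$, and shows by induction that $\delta_k\le C_k$ once $h^2\le c_0^{-1}$ with $c_0=\max\{M^2/(2C_3),\ MC_k/(2C_{k+1}):2\le k\le m-1\}$. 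This delivers $\int\bigl|\prod f_i-\prod g_i\bigr|\le 2MC_{m-1}h^2=2Cr_0 h^2$ directly. The telescoping route and your multinomial route ultimately need the same family of constants $C_k$ to calibrate $c_0$; the paper's recursion just packages the bookkeeping more cleanly and sidesteps the $2^m$-term sum.
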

{\color{black} For multivariate distributions, the kernel variance $h_i$ should be substituted by the kernel covariance $H_i$ and we have a similar result.}
\begin{corollary} \label{corollary1}
  (Multivariate case) If the p-variate posterior densities $f_i(\theta_{i1},\cdots,\theta_{ip})$ and the kernel functions satisfy the conditions in Lemma 2 with $\alpha\geq 2$ and $k =2$, then for sufficiently small $h^2$, where $h^2 = \sum_{i=1}^mtr(H_i)$, the total variation distance between the posterior and the approximated density follows
  \begin{align*}
    \|f - \tilde f\| = \bigg\| C^{-1}\prod_{i=1}^m f_i(\theta_{i1},\cdots,\theta_{ip}) -C_W^{-1} \prod_{i=1}^m W_{H_i}^{(\textbf{K})} f_i(\theta_{i1},\cdots,\theta_{ip})\bigg\|\leq 2r_0r_1^{-2}h^2,
  \end{align*}
where $C,C_W$ are normalizing constants, and $r_0,r_1$ are defined in Theorem 1. 
\end{corollary}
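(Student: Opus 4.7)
The plan is to mimic the one-dimensional proof of Theorem~\ref{thm:1} step by step, replacing scalar Taylor expansions with their $p$-variate analogs. First, I would establish propriety of the joint density $\prod_i K_{H_i}(\theta - t_i) f_i(t_i)$ using the same integrability argument as in the scalar case: the smoothing kernels are bounded and integrable, and Fubini applied against the product of proper subset densities yields a proper joint law for $(\theta, t_1, \ldots, t_m)$ once $h^2$ is small enough.

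The core estimate is a pointwise bound of the form
\begin{align*}
\bigl| W_{H_i}^{(\textbf{K})} f_i(\theta) - f_i(\theta) \bigr| \;\leq\; C_K\, M_2\, \mathrm{tr}(H_i),
\end{align*}
where $C_K$ depends only on the kernel's second moment. To obtain this I would Taylor-expand $f_i(t)$ around $\theta$ to second order; the H\"older smoothness with $\alpha \geq 2$ guaranteed by Lemma~2 controls the remainder by the second-derivative bound $M_2$, while $k=2$ ensures the kernel has vanishing first moment and finite second moment. The linear term $\nabla f_i(\theta)^{\top}(t-\theta)$ integrates to zero against $K_{H_i}(\theta - t)$, and the quadratic remainder $(t-\theta)^{\top}\nabla^2 f_i(\xi)(t-\theta)$ integrated against $K_{H_i}$ produces $\mathrm{tr}\bigl(\nabla^2 f_i(\xi)\,H_i\bigr)$ up to a kernel-dependent constant, which is in turn bounded by $M_2\,\mathrm{tr}(H_i)$.

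Given this local estimate, I would telescope the product difference through
\begin{align*}
\prod_{i=1}^m W_{H_i}^{(\textbf{K})} f_i(\theta) \;-\; \prod_{i=1}^m f_i(\theta) \;=\; \sum_{i=1}^m \bigl( W_{H_i}^{(\textbf{K})} f_i(\theta) - f_i(\theta) \bigr) \prod_{j<i} W_{H_j}^{(\textbf{K})} f_j(\theta) \prod_{j>i} f_j(\theta),
\end{align*}
and bound the remaining factors uniformly by $M$, which controls both the subset posteriors and their Weierstrass transforms. Integrating over $\theta$ yields a contribution of the shape $M \sum_i \mathrm{tr}(H_i) \int \prod_{j\neq i} f_j(\theta)\,d\theta$ up to constants, from which the claimed $2 r_0 r_1^{-2} h^2$ bound with $h^2 = \sum_i \mathrm{tr}(H_i)$ emerges after dividing by the normalizing constants $C$ and $C_W$; the smallness condition on $h^2$ is precisely what ensures $C_W/C$ stays bounded away from zero so that the total variation rewrite remains valid.

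The main obstacle worth care is the matrix bookkeeping in the Taylor remainder: one needs $\int (t-\theta)(t-\theta)^{\top} K_{H_i}(\theta-t)\,dt$ to collapse to a scalar multiple of $H_i$ (via the substitution $v = H_i^{-1/2}(t-\theta)$ combined with the kernel's identity second moment), so that the spectral contribution coming from $\nabla^2 f_i$ aggregates into $\mathrm{tr}(H_i)$ rather than a larger matrix norm. Once this step is handled, the rest of the argument is a verbatim transcription of the scalar derivation.
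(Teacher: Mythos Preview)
Your proposal is correct and matches the paper's approach exactly: the paper itself does not give a separate proof of this corollary, stating only that ``the multivariate derivation proceeds along identical lines'' to Theorem~\ref{thm:1}, and your plan---invoke the multivariate pointwise bound of Lemma~2 in place of Lemma~1, telescope the product, and handle the normalizing constants under a smallness assumption on $h^2$---is precisely that. One small point of care: when you write ``bound the remaining factors uniformly by $M$'' and then land on $\int \prod_{j\neq i} f_j(\theta)\,d\theta$, you are compressing the paper's recursive step, which introduces $\delta_k = \max \|\prod f_j - \prod W f_j \cdot \prod f_j\|_{L_1}$ and shows by induction that $\delta_k \leq C_k$ once $h^\gamma \leq c_0^{-1}$; a naive pointwise replacement of each $W_{H_j}^{(\mathbf{K})} f_j$ by $M$ would give powers of $M$ rather than the integral of the remaining $f_j$'s, so be sure to carry that recursion through when you write it out.
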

In the error bound in Theorem 1, the constants $r_0$ and $c_0$ do not vary much even as the sample size increases to infinity. In fact, they will converge to constants in probability (see discussions in the Appendix after the proof of Theorem \ref{thm:1}). As a result, the choice of the tuning parameters $h_i$ is independent of the sample size, which is a desirable property.

 \section{Weierstrass refinement sampling}
As the error characterized by Theorem \ref{thm:1} maintains a reasonable level, the approximation \eqref{eq:approx} will be effective. In this section, a refinement sampler is proposed. For convenience we will focus on the Gaussian kernel if not specified otherwise, but modifications to other kernels are straightforward.  The algorithm is referred to as a {\em Weierstrass refinement sampler}, because samples from an initial rough approximation to the posterior (obtained via Laplace, variational approximations or other methods) are {\em refined} using information obtained from parallel draws from the different subset posteriors within a Weierstrass approximation.  Typically, the initial rough approximations can be obtained using parallel computing algorithms; for example, Laplace and variational algorithms are parallelizable.    
\par
Equation \eqref{eq:approx} can be used to obtain a Gibbs sampler. For the Gaussian kernel, the density in \eqref{eq:approx} can be reformulated into \eqref{eq:gauss}, which can then be used to construct a Gibbs sampler as follows (univariate case):
\begin{align}
  \theta|t_i &\sim N(\bar t, h_0^2) \label{eq:x}\\
  t_i|\theta &\sim \frac{1}{\sqrt{2\pi}h_i}e^{-\frac{(t_i-\theta)^2}{2h_i^2}} f_i(t_i)\quad i = 1,2,\ldots,m\label{eq:t}.
\end{align}
This approach takes the parameters in each subset as latent variables, and updates $\theta$ via the average of all latent variables. The Gibbs updating is used as a refinement tool; that is, the parameters $\theta$ are initially drawn from a rough approximation (Laplace, variational) and then plugged into the Gibbs sampler for one step updating, known as a refinement step.  Theorem 2 shows refinement leads to geometric improvement.
\begin{theorem}\label{thm2}
  Assume $\theta_0 \sim f_0(\theta_0)$ which is an initial approximation to the true posterior $f(\theta)$. By doing one step Gibbs updating as described in \eqref{eq:x}, \eqref{eq:t} (with general kernel $K$), we obtain a new draw $\theta_1$ with density $f_1(\theta)$. Using the notations in Theorem 1, if the kernel density function $K$ is fully supported on $\mathcal{R}$, then for any given $p_0\in (0, 1)$, there exists a measurable set $D$ such that $\int_D \tilde f(x)dx > p_0$ and,
\begin{align*}
  \int |f_1(\theta) - \tilde f(\theta)|d\theta \leq (1 - \eta)\int_D|f_0(\theta) - \tilde f(\theta)|d\theta + \int_{D^c} |f_0(\theta) - \tilde f(\theta)|d\theta,
\end{align*}
where $\eta$ is a positive value depending on $p_0$ but independent of $f_0$.
\par
Furthermore, if the kernel function satisfies the following condition,
  \begin{align}
    \lim_{\theta\rightarrow\pm\infty}\inf \frac{K_h(\theta - t)}{W_h^{(K)} f_i(\theta)} > 0 \label{eq:thm2}
  \end{align}
for given $t\in \mathcal{R}$ and $i\in\{1,2,\cdots,m\}$, then the total variational distance follows
  \begin{align*}
    \|f_1(\theta_1) - \tilde f(\theta)\| \leq (1 - \eta) \|f_0(\theta_0) - \tilde f(\theta)\|
  \end{align*}
for some $\eta > 0$, which is independent of $f_0$.
\end{theorem}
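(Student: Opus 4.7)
The plan is to treat the Gibbs update in \eqref{eq:x}--\eqref{eq:t} as a single Markov transition kernel whose invariant density is $\tilde f$, and to derive the geometric contraction by establishing a Doeblin-type minorization. Writing the latent-variable conditional $P(t|\theta_0)=\prod_{i=1}^{m}K_{h_i}(t_i-\theta_0)f_i(t_i)/W_{h_i}^{(K)}f_i(\theta_0)$ and the target conditional $p(\theta_1|t)\propto\prod_{i=1}^{m}K_{h_i}(\theta_1-t_i)$, the one-step kernel is $P(\theta_1|\theta_0)=\int p(\theta_1|t)P(t|\theta_0)\,dt$; the joint density in \eqref{eq:approx} is invariant under the two-block Gibbs scan, so $\int P(\theta_1|\theta_0)\tilde f(\theta_0)\,d\theta_0=\tilde f(\theta_1)$. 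Setting $g=f_0-\tilde f$ (hence $\int g=0$), the identity $f_1-\tilde f=Pg$ reduces the claim to bounding $\int|Pg|\,d\theta_1$.

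For the first assertion, fix $p_0\in(0,1)$ and choose a compact set $D\subset\mathcal{R}$ with $\int_D\tilde f>p_0$, which is possible since Theorem \ref{thm:1} guarantees that $\tilde f$ is a proper density. Full support and continuity of $K$ give $K_{h_i}^{*}(t_i):=\inf_{\theta_0\in D}K_{h_i}(t_i-\theta_0)>0$ for every $t_i$, while $W_{h_i}^{(K)}f_i(\theta_0)\le M_i<\infty$ uniformly on the compact set $D$. Taking products across $i$ and integrating against $p(\theta_1|t)$ produces a constant $\eta>0$ (depending on $p_0$ but not on $f_0$) and a probability density $\psi$ such that $P(\theta_1|\theta_0)\ge\eta\,\psi(\theta_1)$ for every $\theta_0\in D$. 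The minorization decomposition $P=\eta\psi+(1-\eta)R$ on $D\times\mathcal{R}$ (with $R$ a Markov kernel) then gives
\begin{align*}
Pg(\theta_1)=\eta\,\psi(\theta_1)\!\int_D g\,d\theta_0+(1-\eta)\!\int_D R(\theta_1|\theta_0)g(\theta_0)\,d\theta_0+\int_{D^c}P(\theta_1|\theta_0)g(\theta_0)\,d\theta_0,
\end{align*}
and integrating absolute values, together with $\int R(\cdot|\theta_0)=\int P(\cdot|\theta_0)=1$ and $|\int_D g|=|\int_{D^c}g|\le\int_{D^c}|g|$ (from $\int g=0$), yields the stated bound.

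For the second assertion, condition \eqref{eq:thm2} upgrades the minorization from a compact $D$ to all of $\mathcal{R}$. The liminf positivity of $K_h(\theta-t)/W_h^{(K)}f_i(\theta)$ as $\theta\to\pm\infty$, combined with strict positivity and continuity of the ratio in a central region, yields a uniform lower bound $K_{h_i}(\theta-t)\ge c_i\,W_{h_i}^{(K)}f_i(\theta)$ with a single constant $c_i>0$ for each $i$. Pushing this bound through the product over $i$ and the $t$-integration against $p(\theta_1|t)$ delivers a global minorization $P(\theta_1|\theta_0)\ge\eta\,\psi(\theta_1)$ valid for every $\theta_0\in\mathcal{R}$, and rerunning the argument above with $D=\mathcal{R}$ makes the $\int_{D^c}|g|$ term vanish, giving the total variation contraction $\|f_1-\tilde f\|\le(1-\eta)\|f_0-\tilde f\|$.

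The main technical obstacle will be the minorization itself. For the first part, the pointwise lower bound $\prod_i K_{h_i}^{*}(t_i)f_i(t_i)/M_i$ can be extremely small for $t$ far from $D$, so full support of $K$ and the tail decay of each $f_i$ must be used carefully to guarantee that the normalization is strictly positive and that $\psi$ is a bona fide density. For the global minorization in the second part, bridging the tail behavior ensured by \eqref{eq:thm2} with continuity on a central region uniformly in $\theta_0$ is the delicate step; once the minorization is in hand, the rest is a standard Markov-chain contraction argument.
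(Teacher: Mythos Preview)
Your overall architecture---minorize the one-step Gibbs kernel, then run a Doeblin contraction---matches the paper's. The second assertion is handled the same way in both. But there is a real gap in your argument for the first assertion.

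You choose a compact $D$ with $\int_D\tilde f>p_0$ and then try to kill the $\psi$-term using $\int g=0$, writing $|\int_D g|=|\int_{D^c}g|\le\int_{D^c}|g|$. Carry the arithmetic through: from the decomposition you wrote,
\[
\int|Pg|\,d\theta_1\;\le\;\eta\Bigl|\int_D g\Bigr|+(1-\eta)\int_D|g|+\int_{D^c}|g|
\;\le\;(1-\eta)\int_D|g|+(1+\eta)\int_{D^c}|g|.
\]
That is a $(1+\eta)$ in front of the $D^c$ integral, not the $1$ the theorem claims. So the ``stated bound'' does not in fact follow from your minorization plus $\int g=0$; the $\psi$-term only disappears if $\int_D g=0$ exactly.

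The paper closes this gap with an additional lemma you are missing: it first proves that for any $p_0\in(0,1)$ one can choose a \emph{bounded} set $D$ satisfying simultaneously $\int_D\tilde f>p_0$ and $\int_D f_0=\int_D\tilde f$. This equal-mass condition forces $\int_D g=0$ on the nose, so the $\eta\psi(\theta_1)\int_D g$ term vanishes and the coefficient on $\int_{D^c}|g|$ stays at $1$. The construction is a continuity/intermediate-value argument on the level sets $\{|x|\le t\}$ intersected with $\{f_0>\tilde f\}$, $\{f_0<\tilde f\}$, $\{f_0=\tilde f\}$. Once that $D$ is in hand, the minorization step and the contraction step are exactly as you outlined. (One side effect worth noting: in the paper's route $D$---and hence the minorization constant---is built from $f_0$ as well as $\tilde f$, so the claimed independence of $\eta$ from $f_0$ is delicate; your choice of $D$ avoids that issue but yields the wrong inequality.)
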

 The proof of Theorem \ref{thm2} is provided in the appendix. {\color{black} Though the theorem is stated for univariate distributions, the result is applicable to multivariate distributions as well.} The concrete refinement sampling algorithm is described below as Algorithm \ref{alg:WRS}.
\begin{algorithm}[htb]
\caption{Weierstrass refinement sampling}
\label{alg:WRS}
\begin{algorithmic}[1]
\REQUIRE ~~\\   
\STATE Input $H_i$ {\color{black} (or $h_i$ for univariate case)} for $i = 1,2,\cdots, m$.
\STATE $N = \mbox{ number of samples }, MCMC\leftarrow\{\}, ~~T_{i}\leftarrow \{\}, i=1,2,\cdots,m$; 
\\
\FOR{$k = 1$ \TO $N$}
\STATE $\theta_k\sim \hat f(\theta)$; ~~\COMMENT{$\hat f(\theta)$ is the rough approximation for the full set posterior}\\
\ENDFOR
\\
~\\
\ENSURE ~~\\                           
\STATE \COMMENT{On each of the m subsets $i=1,2,\cdots,m$ in parallel,}
\FOR{$k = 1$ \TO $N$ } \label{WRS:6}
\STATE Sample $t_{i}^{(k)} = (t_{i1}^{(k)},\cdots,t_{ip}^{(k)})^T \sim dN(t_i|\theta_k, H_i)\cdot f_i(t_i)$;
\STATE $T_{i}\leftarrow T_{i}\cup\{t_i^{(k)}\}$;
\ENDFOR \label{WRS:9}
\\
~\\
\STATE\COMMENT{Collect all $T_i, i=1,2,\cdots,m$ to draw posterior samples}
\FOR {$k=1$ \TO $N$}
\STATE $\tilde \theta \sim N(m^{-1}\sum_{i=1}^m t_{i}^{(k)},(\sum_{i=1}^m H_i^{-1})^{-1})$;
\STATE $MCMC\leftarrow MCMC\cup\{\tilde \theta\}$;
\ENDFOR

\RETURN{MCMC}
\end{algorithmic}
\end{algorithm}
\par
There are a number of advantages of this refinement sampler. First, the method addresses the dimensionality curse (issue 1 in Section 2, which appears as the inefficiency of the Gibbs sampler \eqref{eq:t} and \eqref{eq:x} when $h$ is small), with the large sample approximation only used as an initial rough starting point that is then refined.  We find in our experiments that we have good performance even when the true posterior is high-dimensional, multimodal and very non-Gaussian.
Second, as can be seen from \eqref{eq:t}, the subset posterior densities are multiplied by a common conditional component, which brings them close to each other and limits the problem with subset posterior disconnection (issue 2 of Section 2).  In addition, step \ref{WRS:6}-\ref{WRS:9} can be fully parallelized, as each draw is an independent operation.  There may be advantages of an iterative version of the algorithm, which runs more than one step of the Gibbs update on each of the initial draws (repeating Algorithm \ref{alg:WRS} several times). 
\par
The choice of tuning parameters $H_i$ and the relationship with number of subsets $m$ is an important issue. The parameter $H_i$ on the one hand controls the approximation error for each subset posterior. Apparently, $tr(H_i)$ has to be reduced as the number of subsets $m$ increases. On the other hand, $H_i$ also determines the efficiency of the Gibbs sampler (how fast can the Gibbs sampler evolve the initial approximation towards the true posterior), thus, $H_i$ might need to be chosen adequately large for efficient refinement. Such an argument leads to the conclusion of changing $H_i$ during the refinement sampling process if the refinement will be repeated multiple times. 
\par
According to Fukunaga's \citep{Fukunaga:1972} approach in choosing the bandwidth, if we attempt to apply kernel density estimation directly to the posterior distribution obtained from the full data set, the optimal choice will be
\begin{align}
  H_0 =\big\{(p+2)/4\big\}^{-2/(p+4)} N^{-2/(p+4)}\hat \Sigma, \label{eq:H0}
\end{align}
where $p$ is the number of parameters, $N$ is the total number of posterior samples and $\hat\Sigma$ is the sample covariance of the posterior distribution (to be approximated by the inverse of the Hessian matrix at mode). Based on this result, the starting value for $H_i$ could be chosen as $mH_0$, of which the magnitude is comparable to the covariance of each subset posterior distribution, admitting efficient refinement in the beginning. As the refinement proceeds, the ending point of $H_i$ should be around $m^{-1}H_0$ indicating an accurate approximation. The tuning parameters in the middle of the process should be chosen in between. For example, for a 10-step refinement procedure, one could use $mH_0$ for the first three steps, $H_0$ for the next five steps and $m^{-1}H_0$ for the last two steps.

\section{Weierstrass rejection sampling}

Algorithm \ref{alg:WRS} requires an initial approximation to the target distribution. To avoid this initialization, we propose an alternative self-adaptive algorithm, which is designated as {\em Weierstrass rejection sampler}. Because of the self-adapting feature, the algorithm suffers from certain drawbacks, which will be discussed along with possible solutions in the latter part of this section. We begin with a description of the algorithm.
\par
The formula shown in \eqref{eq:gauss} immediately evokes a rejection sampler for sampling from the joint distribution: assuming $t_i\sim f_i, i=1,2,\ldots,m$. Since $\exp(-\frac{\bar{t^2} - (\bar t)^2}{2h_0^2})\leq 1$, we can accept a draw of $\theta \sim N(\bar t, h_0^2)$ with probability $\exp(-\frac{\bar{t^2} - \bar t^2}{2h_0^2})$. Such an approach makes use of the average of the draws from the subsets to generate further posterior draws, which is similar to the kernel smoothing method proposed by \cite{Neis:etal:2013}. However, with slight modification, the Weierstrass rejection sampler can use the subset posterior draws as approximated samples directly without averaging, and thus avoid the mode misspecification issue. The result is stated below (for univariate case).
\begin{theorem}\label{thm3}
  If the subset posterior densities $f_k, k=1,2,\ldots, m$ satisfy all the conditions stated in Theorem 1 with $\alpha\geq 2$ and posterior draws $\theta_k$, and the second order kernel function $K(\cdot)$ (which is a density function) satisfies that $\max_{\theta\in\mathcal{R}}K(\theta) \leq c$ for some positive constant $c$, then for any given $i\in\{1,2,\cdots,m\}$, the rejection sampler accepts $\theta_i$ with probability $c^{-m+1}\prod_{k\neq i}^m K (\frac{\theta_k-\theta_i}{h_k})$. Referring to the density of the accepted draws by $g(\theta)$, the total variation distance of the approximation error follows
  \begin{align*}
   \|g(\theta) - f(\theta)\| = \frac{1}{2}\bigg\| g(\theta) - C^{-1}\prod_{k=1}^m f_k(\theta)\bigg\|_{L_1}\leq  2r_0r_1^{-2} h'^2,
  \end{align*}
where $h'^2 = \sum_{k\neq i}^m h_k^2$. The constants $r_0, r_1$ are defined in Theorem 1.
\end{theorem}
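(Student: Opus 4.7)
The plan is to break the argument into three pieces: (i) verify that $c^{-(m-1)}\prod_{k\neq i}K((\theta_k-\theta_i)/h_k)$ is a legitimate acceptance probability and identify the resulting density $g$ of accepted draws; (ii) recognise $g$ as the Weierstrass approximation to $\prod_k f_k$ in which only $m-1$ of the subsets are actually smoothed; and (iii) read off the total-variation bound from the proof of Theorem~\ref{thm:1} with the effective bandwidth sum $h'^2 = \sum_{k\neq i} h_k^2$.

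For step (i), the hypothesis $K\leq c$ makes each factor $K(\cdot)/c$ lie in $[0,1]$, so the product divided by $c^{m-1}$ is indeed a valid acceptance probability. Since the proposal draws $(\theta_1,\ldots,\theta_m)$ are independent with $\theta_k\sim f_k$, standard rejection-sampling logic gives that the density of an accepted value $\theta_i = \theta$ is proportional to
\[
 f_i(\theta)\int \prod_{k\neq i} K\!\left(\frac{t_k-\theta}{h_k}\right) f_k(t_k)\,dt_k
 \;=\; f_i(\theta)\prod_{k\neq i}\int K\!\left(\frac{t-\theta}{h_k}\right) f_k(t)\,dt.
\]
Substituting $u = (t-\theta)/h_k$ and using the symmetry of $K$ rewrites each inner integral as $h_k\, W_{h_k}^{(K)} f_k(\theta)$. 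After absorbing the constants $c^{-(m-1)}\prod_{k\neq i}h_k$ into normalisation, I obtain
\[
 g(\theta) \;=\; C_g^{-1}\, f_i(\theta)\prod_{k\neq i} W_{h_k}^{(K)} f_k(\theta).
\]

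For steps (ii) and (iii), observe that this expression is exactly the Weierstrass product of Theorem~\ref{thm:1} with $h_i$ set to zero: the $i$-th subset density appears untransformed, while the remaining $m-1$ are each smoothed by $W_{h_k}^{(K)}$. Writing $f = C^{-1}\,f_i\cdot\prod_{k\neq i}f_k$ and factoring out the common $f_i$ from both densities, the argument of Theorem~\ref{thm:1} can be reused on the reduced product $\prod_{k\neq i} f_k$ versus $\prod_{k\neq i} W_{h_k}^{(K)} f_k$. The telescoping expansion of this product difference combined with the second-order kernel Taylor bound $W_{h_k}^{(K)} f_k = f_k + O(h_k^2)$ (valid since $\alpha\geq 2$ and the kernel has finite variance) sums to $O(\sum_{k\neq i} h_k^2) = O(h'^2)$. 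The constants $r_0$ and $r_1$ defined in Theorem~\ref{thm:1} depend only on uniform bounds on the $f_k$ and on the maximum and second moment of $K$, so they are inherited unchanged, producing the stated prefactor $2r_0 r_1^{-2}$.

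The main obstacle is essentially bookkeeping in step (i): the symmetry of $K$ has to be invoked explicitly to match $\int K((t-\theta)/h_k) f_k(t)\,dt$ with a Weierstrass transform, and one must check that the normalising constants $C$ and $C_g$ differ by an amount compatible with the $O(h'^2)$ telescoping bound so that the same argument as in Theorem~\ref{thm:1} transfers without modification. Once the clean identification $g\propto f_i\prod_{k\neq i}W_{h_k}^{(K)}f_k$ is in hand, Theorem~\ref{thm3} becomes essentially a corollary of Theorem~\ref{thm:1} applied to $m-1$ smoothed factors.
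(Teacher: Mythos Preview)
Your proposal is correct and follows essentially the same route as the paper: the paper also derives the density of the accepted draw as $g(\theta)\propto f_i(\theta)\prod_{k\neq i}W_{h_k}^{(K)}f_k(\theta)$ via the rejection-sampling calculation, then observes that this is the product in Theorem~\ref{thm:1} with the $i$-th factor left unsmoothed, and invokes the same telescoping bound to obtain $2r_0r_1^{-2}h'^2$. Your explicit remark that the symmetry of $K$ is needed to match $\int K((t-\theta)/h_k)f_k(t)\,dt$ with $W_{h_k}^{(K)}f_k(\theta)$ is a detail the paper glosses over but is otherwise the only difference.
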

The following corollary is the multivariate version.
\begin{corollary}
    (Multivariate case) If the p-variate posterior densities $f_i(\theta_{i1},\cdots,\theta_{ip}), i=1,2,\cdots,m$ satisfy the conditions in Corollary \ref{corollary1} with $\alpha\geq 2$ and posterior draws $\theta_k$, and the second order kernels $K_j, j=1,2,\cdots,p$ are bounded by positive constant $c$,  then for any given $i\in\{1,2,\ldots,m\}$, the rejection sampler accepts $\theta_i$ if $c^{-p(m-1)}\prod_{k\neq i}^m\prod_{j=1}^p K_j (\frac{\theta_{kj}-\theta_{ij}}{r_1h_{kj}}) \geq u$ where $u\sim Unif(0,1)$. Referring to the density of the accepted draws by $g(\theta)$, the approximation error follows,
  \begin{align*}
   \|g(\theta) - f(\theta)\| = \frac{1}{2}\bigg\| g(\theta) - C^{-1}\prod_{k=1}^m f_k(\theta)\bigg\|_{L_1}\leq  2r_0r_1^{-2} h'^2,
  \end{align*}
where $h'^2 = \sum_{k\neq i}^mtr(H_i)$. The constants $r_0, r_1$ are defined in Theorem 1.
\end{corollary}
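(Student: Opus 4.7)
My plan is to mirror the proof of Theorem \ref{thm3} while adapting it to the product-kernel structure in $p$ dimensions. The first step is to identify $g(\theta)$ in closed form; the second is to compare it to $f(\theta)$ using the multivariate error estimate of Corollary \ref{corollary1}.

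For the first step, the proposals $\theta_k\sim f_k$ are drawn independently across the $m$ subsets, and conditional on all of them the rule accepts $\theta = \theta_i$ with probability $c^{-p(m-1)}\prod_{k\neq i}^m\prod_{j=1}^p K_j\!\bigl(\tfrac{\theta_{kj}-\theta_{ij}}{r_1 h_{kj}}\bigr)$, which lies in $[0,1]$ because each $K_j\leq c$. Marginalizing out the $\theta_k$ for $k\neq i$ using independence yields
\begin{align*}
  g(\theta) \;\propto\; f_i(\theta)\prod_{k\neq i}^m \int f_k(t)\prod_{j=1}^p K_j\!\left(\tfrac{t_j-\theta_j}{r_1 h_{kj}}\right)dt \;=\; f_i(\theta)\prod_{k\neq i}^m W_{H_k}^{(\mathbf{K})}f_k(\theta),
\end{align*}
once the per-coordinate scaling factors $r_1 h_{kj}$ are absorbed into the proportionality. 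Thus the accepted draws target a product in which $m-1$ of the subset densities have been Weierstrass smoothed while $f_i$ is kept intact.

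Second, I bound $\|g-f\|$ by reusing the local expansion argument behind Corollary \ref{corollary1}. That argument controls $\prod_{k}W_{H_k}^{(\mathbf{K})}f_k - \prod_k f_k$ by summing coordinate-wise second-moment contributions proportional to $r_1^{-2}\,tr(H_k)$, each weighted by the uniform factor $r_0 = C^{-1}M\max_i\int\prod_{j\neq i}f_j$. Because the $i$-th factor in $g$ is not smoothed, its term simply drops out of that sum, and the aggregated bound becomes $2r_0 r_1^{-2}\sum_{k\neq i}tr(H_k) = 2r_0 r_1^{-2}h'^2$, which is the claim.

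The main obstacle I expect is routine multivariate bookkeeping: I need to verify that the coordinate-wise kernel variances $(r_1 h_{kj})^2\int t^2 K_j(t)\,dt$ aggregate into $tr(H_k)$, and that the $r_1$ scaling in the acceptance rule is precisely the one built into the definition of $r_1$ in Theorem \ref{thm:1}, so that each shifted kernel has the normalized variance used there. I would also confirm that $r_0$ can be taken uniformly in $i$, which is immediate since $r_0$ is defined as a maximum over $i$. Once these compatibility checks are in place, the bound is inherited directly from the Corollary \ref{corollary1} derivation applied to $f_i\cdot\prod_{k\neq i}W_{H_k}^{(\mathbf K)}f_k$.
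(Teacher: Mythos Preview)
Your proposal is correct and follows essentially the same route as the paper. The paper does not write out a separate proof for this multivariate corollary; its proof of Theorem~\ref{thm3} computes the accepted density as $g_i(\theta)=C' f_i(\theta)\prod_{k\neq i}W_{h_k}^{(K)}f_k(\theta)$ via the standard rejection-sampling calculation and then observes that this is the same product as in Theorem~\ref{thm:1} except that the $i$-th factor is not smoothed, so the bound from that argument carries over with the $i$-th term dropped. Your two steps reproduce exactly this structure in the multivariate setting, with Corollary~\ref{corollary1} playing the role of Theorem~\ref{thm:1}, and your bookkeeping remarks about the $r_1$ scaling and the aggregation into $tr(H_k)$ are the only additional points needed beyond the univariate proof.
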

With the above theorem, it is easy to construct a rejection sampler as follows: for each iteration we randomly select one draw from the pool (or according to some reasonable weights), and perform the rejection sampling according to Theorem \ref{thm3}. We repeat the procedure for all iterations and then gather the accepted draws. The reason that the rejection operation is only conducted on one draw within the same iteration is to avoid incorporating extra undesirable correlation between the accepted draws.
\par
The effectiveness of a rejection sampler is determined by the acceptance rate. For a $p$-variate model and $m$ subsets, the acceptance rate for the Weierstrass rejection sampler can be calculated as (assuming all $h_{kj}$s are equal to $h^*$),
\begin{align}
  AR_{p,m} = P\bigg\{ c^{-p(m-1)}\prod_{k\neq i}^m\prod_{j=1}^p K_j\bigg(\frac{\theta_{kj}-\theta_{ij}}{r_1h_{kj}}\bigg) \geq u\bigg\} = O\bigg(\frac{r_1h^*}{c} \bigg)^{p(m-1)},
\end{align}
for adequately small $h^*$. Clearly, the acceptance rate suffers from the curse of dimensionality in both $m$ and $p$, so the number of posterior samples has to increase exponentially with $m$ and $p$.  This is the same problem as with the kernel smoothing method discussed before. To ameliorate the dimensionality curse, we provide the following solution.

\subsection{Sequential rejection sampling}
The number of subsets $m$ is easier to tackle. It is straightforward to bring $m$ down to $\log_2 m$ by using a {\bf pairwise combining strategy}: we first combine the $m$ subsets into a pairwise manner to obtain posterior draws on $m/2$ subsets.  This process is repeated to obtain $m/4$ subsets and so on until obtaining the complete data set. The whole procedure takes about $\lfloor \log_2 m \rfloor$ steps, and thus brings the power from $m$ down to $\lfloor \log_2 m \rfloor$.

\par
The curse of dimensionality in the number of parameter $p$ is less straightforward to address. If the $p$-variate posterior distribution $f$ satisfies that the parameters are all independent, then
\begin{align*}
    f(\theta_1,\theta_2,\cdots,\theta_p) = \prod_{j=1}^p f(\theta_j)
\end{align*}
and
\begin{align}
f(\theta_j) \propto \prod_{i=1}^m f_i(\theta_{ij}), \label{eq:mag}
\end{align}
where $f(\theta_j)$ and $f_i(\theta_{ij})$ are the posterior marginal densities for the $j^{th}$ parameter on the full set and on the $i^{th}$ subset, respectively.
The equation \eqref{eq:mag} indicates that the posterior marginal distribution satisfies the independent product equation as well. Therefore, one can obtain the joint posterior distribution from the marginal posterior distributions, which are obtained by combining the subset marginal posterior distributions via Weierstrass rejection sampling. This approach thus avoids the dimensionality issue (since in this case $p$ is always equal to 1). 
\par
 In general, the posterior marginal distribution does not satisfy the equation \eqref{eq:mag} because,
\begin{align*}
  f(\theta_j) &= \int f(\theta_1,\theta_2,\cdots,\theta_p)\prod_{k\neq j}d\theta_k \propto \int \prod_{i=1}^m f_i(\theta_{i1},\cdots,\theta_{ip})\prod_{k\neq j} d\theta_{ik}\\
&= \prod_{i=1}^m f_i(\theta_{ij}) \cdot  \int \prod_{i=1}^m p_i(\theta_{ik},k\in\{1,2,\cdots,p\}\backslash \{j\}|\theta_{ij}).
\end{align*}
The first term in the above formula is exactly the independent product equation, while the second term is due to parameter dependence. However, inspired by this marginal combining procedure, we propose a (parameter-wise) sequential rejection sampling scheme that decomposes the whole sampling procedure into $p$ steps, where each step is a one-dimensional conditional combining. The intuition is as follows: We first sample $\theta_1$ from its subset marginal distribution $f_i(\theta_{i1}), i = 1,2,\cdots, m$, and combine the draws via the Weierstrass rejection sampler to obtain $\theta_1^*\sim \prod_{i=1}^m f_i(\theta_{i1})/C_0$. Next, we plug in $\theta_1^*$ into each subset likelihood to sample $\theta_2$ from its subset conditional distribution $f_i(\theta_{i2}|\theta_1^*), i = 1,2,\cdots,m$ and then combine them to obtain the $\theta_2^* \sim \prod_{i=1}^m f_i(\theta_2|\theta_1^*)/C_1$, where $C_1$ is the normalizing constant (Notice that $C_1$ depends on the value of $\theta_1^*$). We continue the procedure until $\theta_p^*$, obtaining one posterior draw $\theta^* = (\theta_i^*, i=1,2,\cdots,p)$ that follows
\begin{align}
  \theta^* \sim \frac{\prod_{i=1}^m f_i(\theta_1)f_i(\theta_2|\theta_1)\cdots f_i(\theta_p|\theta_j, j<p)}{\prod_{j=0}^{p-1} C_j} = \frac{\prod_{i=1}^m f_i(\theta_j, j=1,\cdots,p)}{\prod_{j=0}^{p-1} C_j}. \label{eq:psis}
\end{align}

The numerator of \eqref{eq:psis} is exactly the target formula \eqref{eq:IPE}, while the denominator serves as the importance weights. The advantage of this new scheme is that it eliminates the dimensionality curse while keeping the number of required sequential steps low ($p$ steps). Moreover, the importance weights can be calculated easily and accurately as
\begin{align}
  C_j = \int \prod_{i=1}^m f_i(\theta_{j+1}|\theta_k^*, k\leq j) d\theta_{j+1}.
\end{align}
Notice that the integrated functions are all one-dimensional. Thus, an estimated (kernel-based) density $\hat f_i(\theta_{j+1}|\theta_k^*, k\leq j)$, combined with the numerical integration technique, is adequate to provide an accurate evaluation for $C_j$. An alternative approach for estimating $C_j$ is as follows,
\begin{align}
  C_j =  \frac{f(\theta_{j+1}^*|\theta_k^*, k\leq j)}{\prod_{i=1}^m f_i(\theta_{j+1}^*|\theta_k^*, k\leq j)},
\end{align}
where $f(\theta_{j+1}|\theta_k^*, k\leq j)$ can be obtained from kernel density estimation on the combined draws of $\theta_{j+1}^*$ (which requires to sample more than one $\theta_{j+1}^*$ at each iteration). $f_i(\theta_{j+1}|\theta_k^*, k\leq j)$ can be obtained similarly as before from the subset draws. The whole scheme is described in Algorithm \ref{alg:PSIS}.
\begin{algorithm}[htb]
\caption{(Sequential rejection sampling}
\label{alg:PSIS}
\begin{algorithmic}[1]
\REQUIRE ~~\\   
\STATE Input $N_0, h_j, j=1,2,\cdots, p$.\\
 ~~~~\COMMENT{$N_0$ is the number of samples drawn within each of the $p$ steps.}
\STATE Set $h_{ij} = \sqrt{m}h_j$ for $i = 1,2,\cdots, m$ and $C_j = 0, j=0,\cdots, p-1$.
\STATE $MCMC\leftarrow\{\}, ~~T_{i}\leftarrow \{\}, i=1,2,\cdots,m$; 
\\
\ENSURE ~~\\                           
\FOR{$j = 1$ \TO $p$ }
\STATE \COMMENT{On each of the m subsets $i=1,2,\cdots,m$,}
\FOR{$t = 1$ \TO $N_0$}
\STATE Sample $\theta_{ij}^{(t)}\sim f_i(\theta_j|\theta_1^*,\cdots,\theta_{j-1}^*)$
\STATE $T_{i}\leftarrow T_{i}\cup\{\theta_{ij}^{(t)}\}$;
\ENDFOR
\STATE\COMMENT{Collect all $T_i, i=1,2,\cdots,m$ to combine the draws}
\STATE Obtain one $\theta_j^*$ by combining $\theta_{ij}^{(t)}, i=1,2,\cdots,m$ via Weierstrass rejection sampling. \label{alg:rep}
\STATE Calculate $C_{j-1}$ as $C_{j-1} = \int \prod_{i=1}^m \hat f_i(\theta_j|\theta_1^*, \cdots,\theta_{j-1}^*) d\theta_j$.
\STATE $MCMC \leftarrow MCMC\cup\{(\theta_j^*, C_{j-1})\}$.
\ENDFOR
\\

\RETURN{MCMC}
\end{algorithmic}
\end{algorithm}
\par
Algorithm \ref{alg:PSIS} only produces one simulated posterior draw. Therefore, in order to obtain a certain number of posterior draws, the algorithm needs to be executed in parallel on multiple machines. For example, if one aims to acquire $N$ posterior draws, then $N$ parallel machines can be used, with each machine able to run $m$ sub-threads to fulfill the whole procedure. It is worth noting that this new scheme is also applicable to the kernel smoothing method proposed by \cite{Neis:etal:2013} for overcoming the dimensionality curse: just substituting the step \eqref{alg:rep} in Algorithm \ref{alg:PSIS} with the corresponding kernel method.
\par
The new algorithm still involves a sequential updating structure, but the number of steps is bounded by the number of parameters $p$, which is different from the usual MCMC updating. A brief interpretation for the effectiveness of the new algorithm is that it changes how error is accumulated. The original $p$-dimension function with a bandwidth $h$ will accommodate the error in a manner as $h^{2/p}$, while now with the decomposed p steps, the error is accumulated linearly as $ph^2$ which reduces the dimensionality curse.

\section{Numerical study}
In this section, we will illustrate the performance of Weierstrass samplers in various setups and compare them to other partition based methods, such as subset averaging and kernel smoothing. More specifically, we will compare the performance of the following methods.
\begin{itemize}
\item Single chain MCMC: running a single Markov chain Monte Carlo on the total data set.
\item Simple averaging: running independent MCMC on each subset, and directly averaging all subset posterior draws within the same iteration.
\item Inverse variance weighted averaging: running MCMC on all subsets, and carrying out a weighted average for all subset posterior draws within the same iteration. The weight follows
  \begin{align*}
    w_i = (\sum_{k=1}^m \hat\Sigma_i^{-1})^{-1}\hat \Sigma_i^{-1},
  \end{align*}
where $\hat \Sigma_i$ is the posterior variance for the subset $i$.
\item Weierstrass sampler: The detailed algorithms are provided in previous section. For the Weierstrass rejection sampler, we do not specify the value of the tuning parameter $h_{ij}$, but instead, we specify the acceptance rate. (i.e., we first determine the acceptance rate and then calculate the corresponding $h_{ij}$). 
\item Non-parametric density estimation (kernel smoothing): Using kernel smoothing to approximate the subset posteriors and obtain the product thereafter. Because this method is very sensitive to the choice of the bandwidth and the covariance of the kernel function when the dimension of the model is relatively high, in most cases, only the procedure of marginal subset densities combining will be considered in this section.

\item Laplacian approximation: A Gaussian distribution with the posterior mode as the mean and the inverse of the Hessian matrix evaluated at the mode as the variance.
\end{itemize}
The models adopted in this section include logistic regression, binomial model and Gaussian mixture model. The performance of the seven methods will be evaluated in terms of approximation accuracy, computational efficiency, and some other special measures that will be specified later. We made use of the R package ``BayesLogit'' \citep{Polson:etal:2013} to fit the logistic model and wrote our own JAGS code for the Gaussian mixture model.

\subsection{Weierstrass refinement sampling}
We assess the performance of the refinement sampler in this section. The first part will be an evaluation of the refinement property as claimed in Theorem \ref{thm2} and in the second part we will compare the performance for various methods within the logistic regression framework.
\subsubsection{Refinement property}
We evaluate the refinement property under both a bi-modal posterior distribution and the real data. For the bi-modal posterior distribution, the two subset posterior densities are
\begin{align*}
  p_1 = \frac{1}{2}N(-1.7, 0.5^2) + \frac{1}{2}N(0.8, 0.5^2)\quad p_2 = \frac{1}{2}N(-1.3, 0.5^2) + \frac{1}{2}N(1.2, 0.5^2).
\end{align*}
Then according to \eqref{eq:IPE}, the posterior on full data set will be roughly (omitting a tiny portion of probability),
\begin{align*}
  p_{12} = \frac{1}{2}N(-1.5, 0.5^2/2) + \frac{1}{2}N(1.0, 0.5^2/2).
\end{align*}
\par
The initial approximation adopted for the refinement sampling is a normal approximation to $p_{12}$ which has the same mean and variance. We trace the change of the refined densities for different numbers of iterations and illustrated them in Fig \ref{fig:ref} (left). Because the initial approximation is very different from the true target, the tuning parameter $h$ will start at a large value and then decline at a certain rate, in particular, $h = 0.8^{iteration}$. As shown in Fig \ref{fig:ref} (left), the approximation has evolved to a bi-modal shape after the first iteration and it appears that 10 steps are adequate to obtain a reasonably good result. 
\par
Fig \ref{fig:ref} (right) shows part of the refinement results for the real data set (which will be described in more detail in the real data section), in which a logit model was fitted on the 200,000 data set with a partition into 20 subsets. We set the initial approximation apart from the posterior mode and monitor the evolution of the refinement sampler (the tuning parameters are chosen according to \eqref{eq:H0}, and $H_0$ is chosen to be the inverse of the Hessian matrix at the mode). The refinement sampler quickly moves from the initial approximation to the truth in just 10 steps. (No posterior distribution from a full MCMC was plotted, as the sample size is too large to run a full MCMC)
\begin{figure}[!htpb]
  \centering
  \includegraphics[height = 6.5cm]{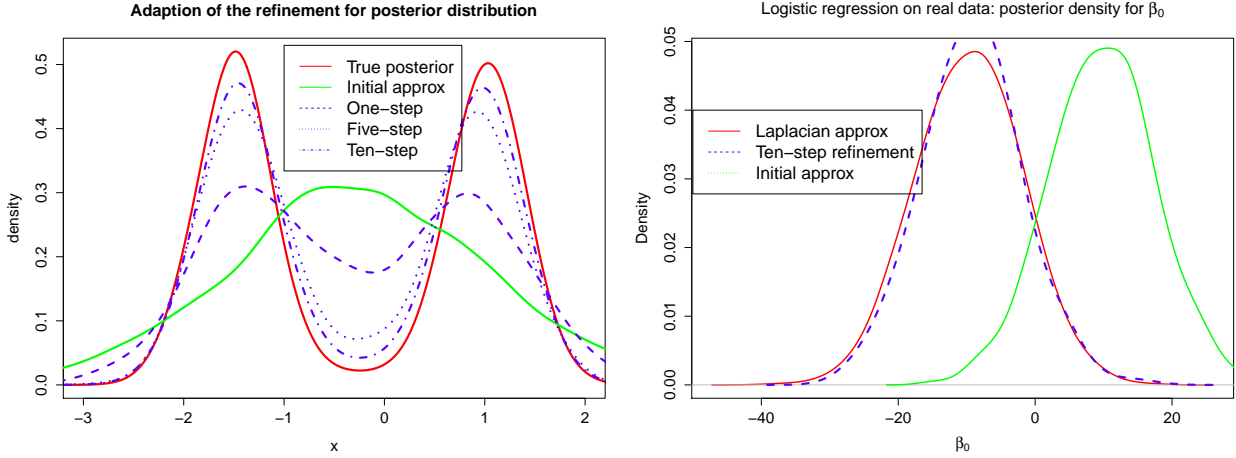}
  \caption{Refinement property for the Weierstrass sampler}
  \label{fig:ref}
\end{figure}

\subsubsection{Approximation accuracy}
We adopt the logistic model for assessing the approximation performance. The logistic regression model is broadly adopted in many scientific fields for modeling categorical data and conducting classification,
\begin{align*}
  P(Y=1|X) = logit^{-1}(X\beta+\beta_0) = \frac{\exp(X\beta+\beta_0)}{1+\exp(X\beta+\beta_0)},
\end{align*}
where $logit$ is the corresponding link function and $\beta_0$ is the intercept. 
The predictors $X = (X_1,X_2,\cdots,X_p)$ follow a multivariate normal distribution $N(0, \Sigma)$, and the covariance matrix $\Sigma$ follows
\begin{align*}
  \Sigma_{ii} = var(X_i) = 1, \quad \Sigma_{ij} = cov(X_i, X_j) = \rho,\quad i=1,2,\cdots,p
\end{align*}
where $\rho$ will be assigned two different values (0 and 0.3) to manipulate two different correlation levels (independent to correlated).
\par
In this study, the model contains $p = 50$ predictors and $n = 10,000$ or $ 30,000$ observations, both of which will be partitioned into $m = 20$ subsets. The coefficients $\beta$ follow
\[
\beta_i = \left\{ 
  \begin{array}{l l}
    (-1)^{u_i}(1+|N(0,1)|) & i\geq 10 \\
    0 & 1\leq i <10\\
    1 & i = 0
  \end{array}
\right.
\]
where $u_i$ is a Bernoulli random variable with $P(u_i=1) = 0.6$ and $P(u_i=-1) = 0.4$. The reason to specify the coefficients in this way is to demonstrate the performance of all methods in different situations (both easy and challenging). For logistic regression, the closer the coefficient is to zero, the larger the effective sample size and the better the performance of a Gaussian approximation. See Fig \ref{fig:0}.
\begin{figure}[!htpb]
  \centering
  \includegraphics[height = 17cm, angle = 270]{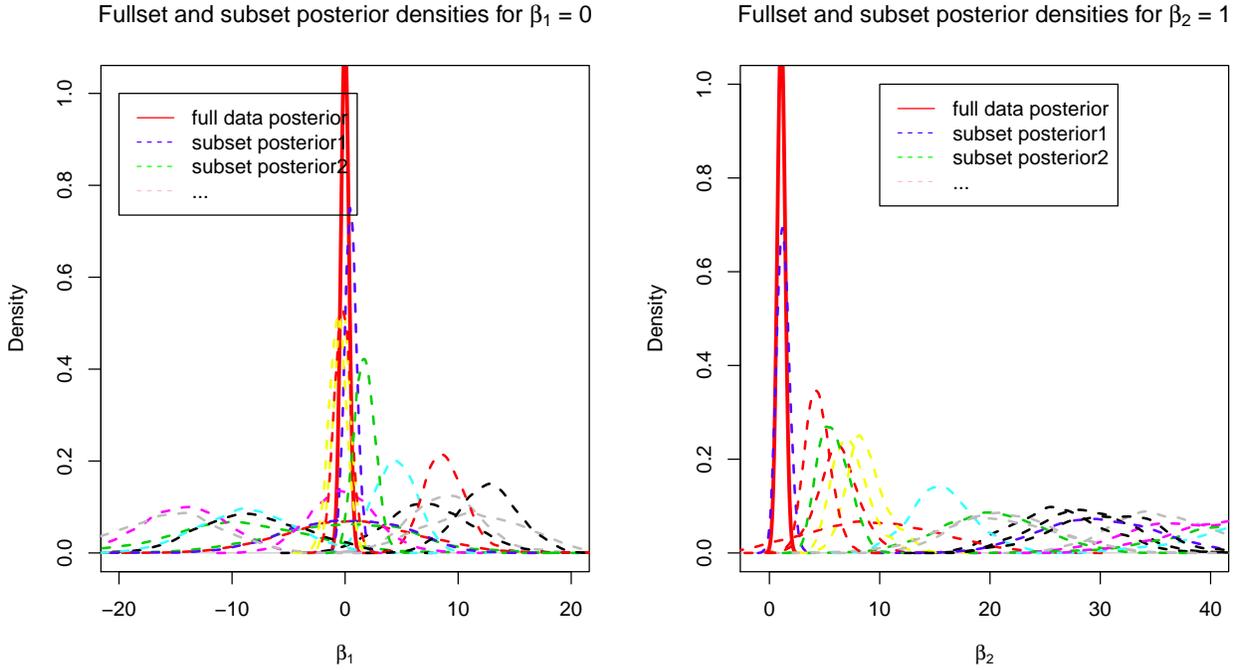}
  \caption{Subset posteriors for zero and non-zero coefficient. $\rho = 0, n = 10,000$.}
  \label{fig:0}
\end{figure}
\par
Because kernel density estimation is very sensitive to the choice of the kernel covariance when the dimension of the model is moderately high, {\color{black} we only demonstrate the performance of kernel smoothing for marginally combined posterior distributions \eqref{eq:mag} in this section. The Weierstrass rejection sampler is carried out in a similar way.} These two methods will be compared more formally in the next section. 
\par
50 synthetic data sets were simulated for each pair of $\rho$ and $n$. For posterior inference, we drew 20,000 samples (thinning to 2,000) after 50,000 burn-in for single MCMC chain and each subset MCMC. For Weierstrass refinement sampling, the Laplacian approximation is adopted as initial approximation, and the kernel variance is chosen according to \eqref{eq:H0}. We conduct 10 steps refinement to obtain 2,000 refined draws (within each refinement step, we run 100 MCMC iterations on each subset to obtain an updated draw). The results are summarized in Fig \ref{fig:logit_ind_10000}, \ref{fig:logit_03_10000}, \ref{fig:logit_ind_30000}, \ref{fig:logit_03_30000}, and Table \ref{tab:1}. 

\par
The posterior distribution of two selected parameters (including both zero and nonzero) for different $\rho$'s and $n$'s are illustrated in the figures. The nonzero parameter was plotted in two different scales in order to incorporate multiple densities in one plot.
The numerical comparisons include the difference of the marginal distribution of each parameter, the difference of the joint posterior and the estimation error of the parameters. We evaluate the difference of the marginal distribution by the average total variation difference (upper bounded by 1) between the approximated marginal densities and the true posterior densities. The result will be separately demonstrated for nonzero and zero coefficients, and denoted by $\|\hat p(\beta_1) - p(\beta_1)\|$ (nonzero) and $\|\hat p(\beta_0) - p(\beta_0)\|$ (zero) respectively. Evaluating the difference between joint distributions is difficult for multivariate distributions, as one needs to accurately estimate a distance between a true joint distribution and a set of samples from an approximation. Therefore, we adopted the approximated Kullback-Leibler divergence between two densities (approximating two densities by Gaussian) for reference,
  \begin{align*}
    D_{KL}(\hat p(\beta)||p(\beta)) = \frac{1}{2}(tr(\Sigma^{-1}\hat\Sigma)+(u-\hat u)^T\Sigma^{-1}(u-\hat u)-p-\log(|\hat\Sigma|/|\Sigma|)),
  \end{align*}
where $u$ and $\Sigma$ are the sample mean and sample variance of the true posterior $p(\beta)$, while $\hat u$ and $\hat \Sigma$ are those for $\hat p(\beta)$. Finally, the error of the parameter estimation will be demonstrated as the ratio between the estimation error of the approximating method and that of the true posterior mean, which is $Error(\beta_{a}|\beta_{p}) = \|\hat \beta_{approx}-\beta\|_2/\|\hat \beta_{posterior} - \beta\|_2$. The average results are shown in Table.\ref{tab:1}.
\begin{figure}[!htpb]
  \centering
  \includegraphics[height = 4.5cm]{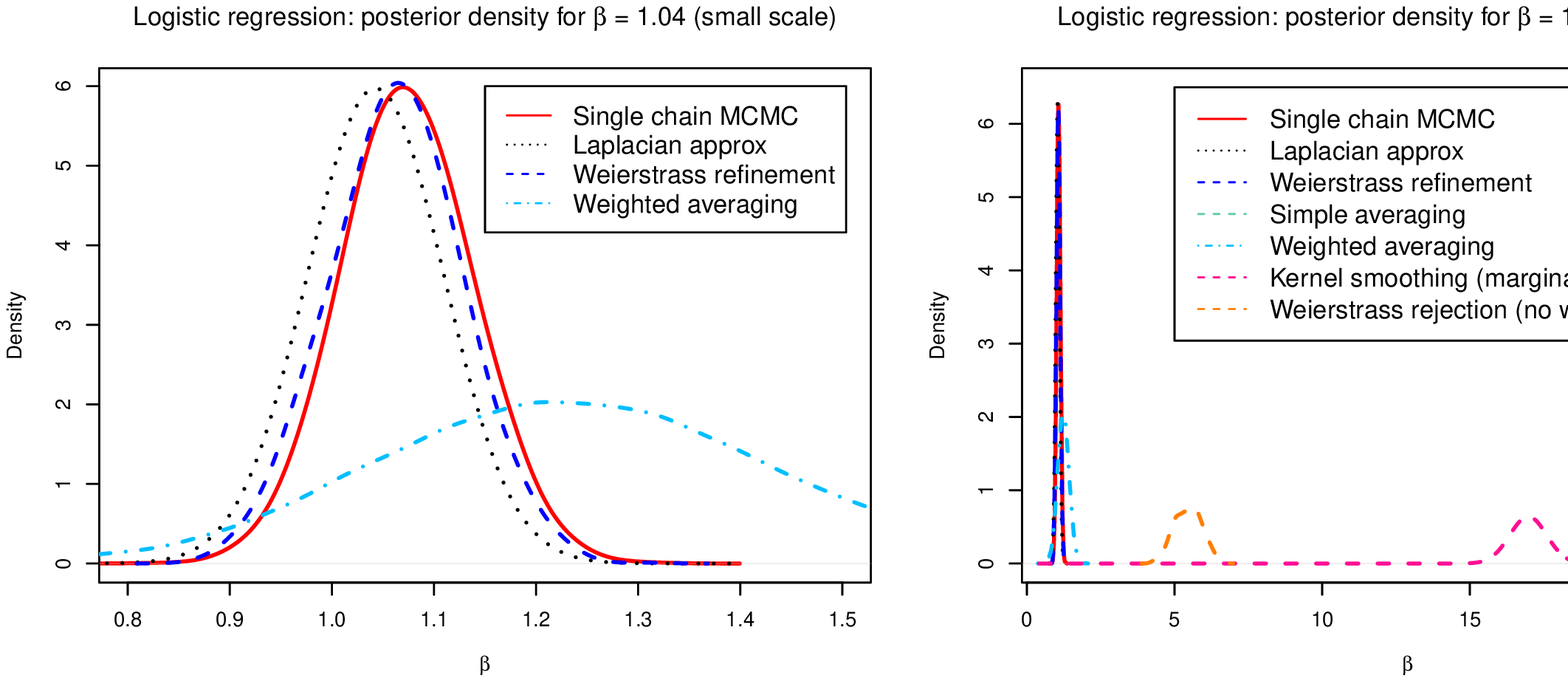}
  \caption{Posterior densities for independent predictors $\rho = 0, n = 10,000$.}
  \label{fig:logit_ind_10000}
\end{figure}

\begin{figure}[!htpb]
  \centering
  \includegraphics[height = 4.5cm]{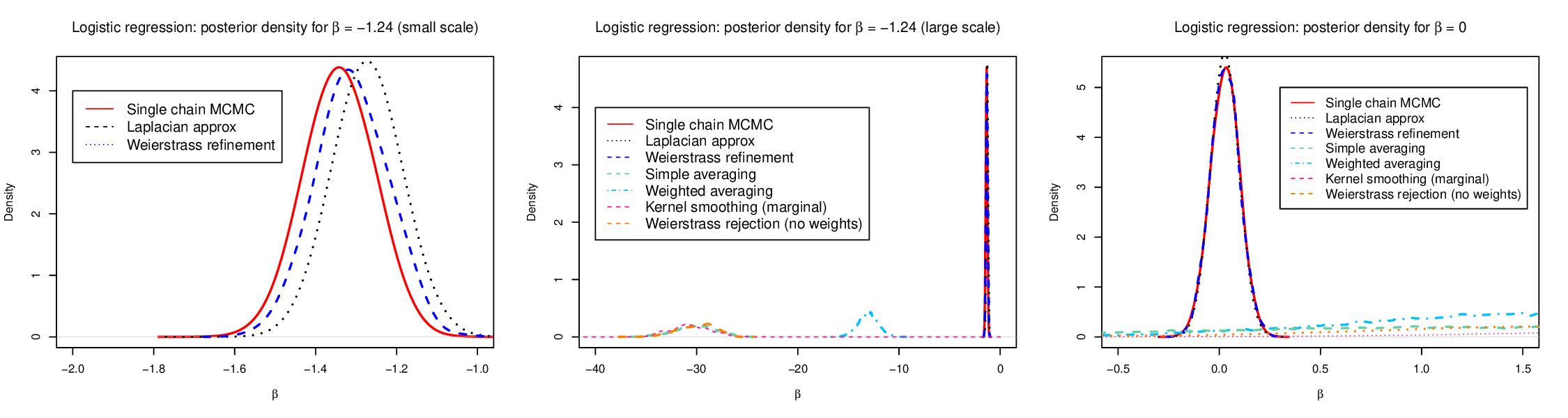}
  \caption{Posterior densities for independent predictors $\rho = 0.3, n = 10,000$.}
  \label{fig:logit_03_10000}
\end{figure}

\begin{figure}[!htpb]
  \centering
  \includegraphics[height = 4.5cm]{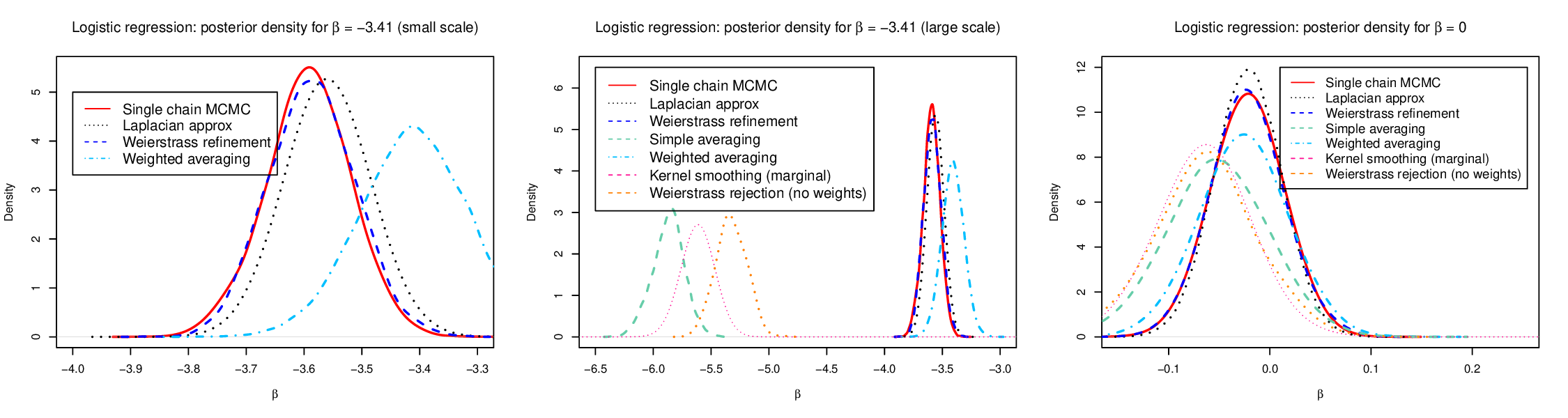}
  \caption{Posterior densities for independent predictors $\rho = 0, n = 30,000$.}
  \label{fig:logit_ind_30000}
\end{figure}

\begin{figure}[!htpb]
  \centering
  \includegraphics[height = 4.5cm]{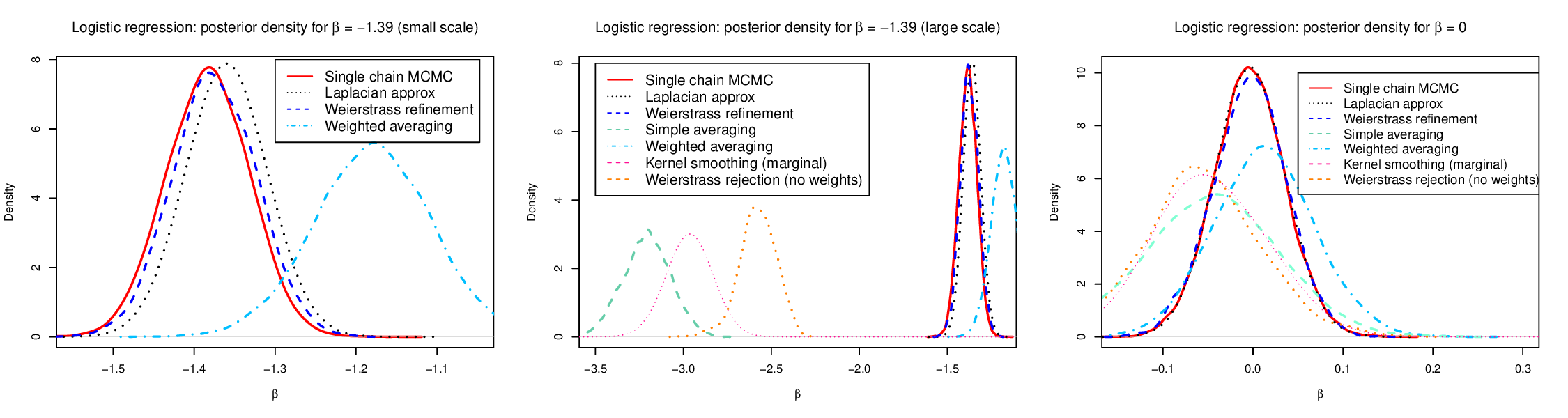}
  \caption{Posterior densities for independent predictors $\rho = 0.3, n = 30,000$.}
  \label{fig:logit_03_30000}
\end{figure}

\begin{table}[!htpb]
\footnotesize
  \centering
  \caption{Approximation accuracy for marginal and joint densities. W. Refi. and W. Rej stand for Weierstrass refinement sampling and rejection sampling (without weight correction).}
  \begin{tabular}{l|l|l|cccccc}
\hline\hline
                                     & $n$ &$\rho$ & W. Refi. & Laplacian & Simple Ave. & Weighted Ave. & Kernel & W.Rej\\
\hline
                                     & 10,000 & 0        & 0.0683    & 0.177 & 0.995 & 0.816 & 0.997 & 0.989\\
$\|\hat p(\beta_1) - p(\beta_1)\|$   &        & 0.3      & 0.105    & 0.238  & 1.00  & 0.873 & 1.00 & 1.00 \\
                                     & 30,000 & 0        & 0.0377   & 0.112 &  1.00   & 0.648  & 1.00 & 0.990\\
                                     &        & 0.3      & 0.0543    & 0.137 & 1.00 & 0.738 & 1.00 & 0.995\\

\hline
                                     & 10,000  & 0       &0.0306     &0.0157 & 0.923 & 0.619 & 0.904 & 0.826\\
$\|\hat p(\beta_0) - p(\beta_0)\|$   &         & 0.3     &0.0358     &0.0235 & 0.946 & 0.832 & 0.954 & 0.891\\
                                     & 30,000  & 0       &0.0231     &0.0091 & 0.268 & 0.106 & 0.255 & 0.245\\
                                     &         & 0.3     &0.0268     &0.0114 & 0.565 & 0.224 & 0.548 & 0.457\\

\hline
                                     & 10,000  & 0       &0.487       & 0.766 & $2.38\times 10^5$ & 313.16 & --- & ---\\
$D_{KL}(\hat p(\beta)||p(\beta))$     &        & 0.3      & 0.551       & 1.207  & $2.33\times 10^5$ & $2.32\times 10^4$ & --- & ---\\
                                     & 30,000  & 0        &0.359       & 0.463 & 478.47 &8.997 & --- & ---\\
                                     &         & 0.3      &0.419       & 0.604 & $2.07\times 10^4$ & 51.07 & --- & --- \\
\hline
                                     & 10,000  & 0       & 0.867      &0.619  & 352.11 & 9.20 & 293.30 & 197.86\\
 $Error(\beta_{a}|\beta_{p})  $&         &0.3      & 0.823      &0.391 & 362.90&144.02 & 249.74 &237.95 \\
                                     & 30,000  &0        & 0.922      &0.890 & 17.92 & 1.34 & 12.31 & 11.38\\
                                     &         &0.3      & 0.839      &0.602  & 151.13 & 3.99 & 102.20 & 26.04 \\
\hline
  \end{tabular}
\label{tab:1}
\end{table}

\subsection{Weierstrass rejection sampling}
In this section, we will evaluate the performance for the Weierstrass rejection sampler. Theorem 3 indicates that the rejection sampler may enjoy an advantage of posterior mode searching, as this sampler will make use of draws from subsets directly, instead of any form of averaging (averaging might mess up modes). This property will be investigated in the first part via the mixture model. For the second part, we compare the approximation accuracy of the rejection sampler and other methods through a simple binomial model.
\subsubsection{Mode exploration}
Finite mixture model are widely used for clustering and approximation, but face well known challenges due to non-identifiability and multi-modality. The Gibbs sampler for normal mixture models, as pointed out in \cite{Jasra:etal:2005}, suffers from the inefficiency of mode exploration, which has motivated methods such as split-merge and parallel tempering \citep{Earl:Deem:2005, Jasra:etal:2005, Neis:etal:2013}. In this section, we implement the problematic Gibbs sampler for normal mixture model without label switching moves, as a mimic to more general situations in which we do not have a specific solution for handling multi-modes (label switching is a specific method for dealing with mixture models). Then we parallelize this Gibbs sampler via different subset-based methods, and examine the abilities in posterior mode exploration. 
\par
The mixture distribution follows,
\begin{align*}
  x\sim \frac{1}{2}N(0, 0.5^2) + \frac{1}{4}N(2, 0.5^2) + \frac{1}{4} N(4, 0.5^2).
\end{align*}
We simulate 10,000 data points from this model, divided into 10 subsets, which will be analyzed via single chain MCMC as well as parallelized algorithms. Since the model is multidimensional, we drew 2000 samples via the sequential rejection sampling described in Algorithm \ref{alg:PSIS}. For other samplers, we obtained 20,000 posterior draws after 20,000 iterations burn-in on each subset. The results are plotted in Fig \ref{fig:mix3}.
\begin{figure}[!htpb]
  \centering
  \includegraphics[height = 5.2cm]{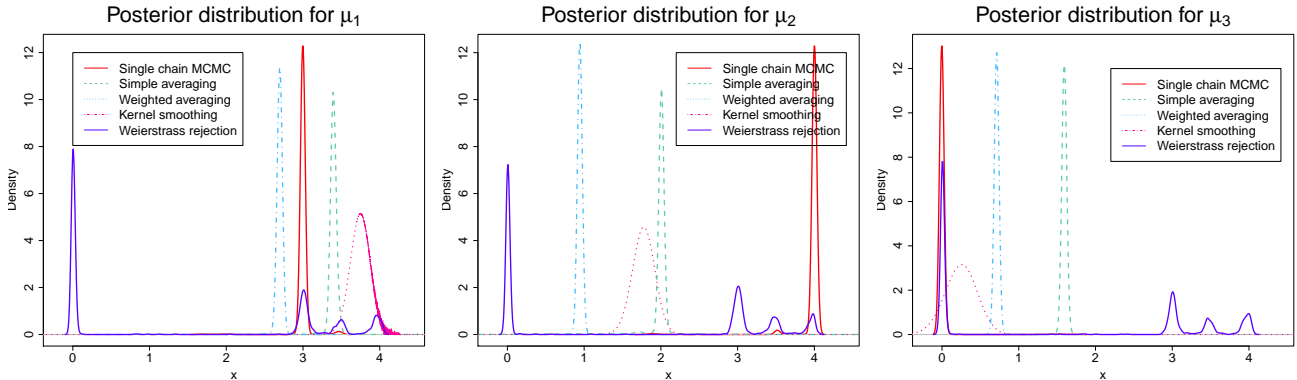}
  \caption{Posterior distributions for the component means}
  \label{fig:mix3}
\end{figure}
It can be seen from Fig \ref{fig:mix3} that the Weierstrass (sequential) rejection sampling correctly recognized the posterior modes. There is one false mode, but is also picked by the single chain MCMC. 

\subsubsection{Approximation accuracy}
In this section, the Beta-Bernoulli model is employed for testing the performance of the Weierstrass rejection sampler. The dimension of this model is low and the true posterior distribution is available analytically, which makes the model appropriate for comparing rejection sampling and kernel smoothing method. The parameter $p$ will be assigned two values in this section: $p = 0.1$ that corresponds to a common scenario and $p = 0.001$ which corresponds to the rare event case. We simulated 10,000 samples which were then partitioned into 20 subsets. To obtain a conjugate posterior, we assign a beta prior $Beta(0.01,0.01)$ and draw $100,000$ posterior samples for further analysis. The posterior densities for different values of $p$ are shown in Fig \ref{fig:5}.
\begin{figure}[!htbp]
  \centering
\includegraphics[height = 6.1cm]{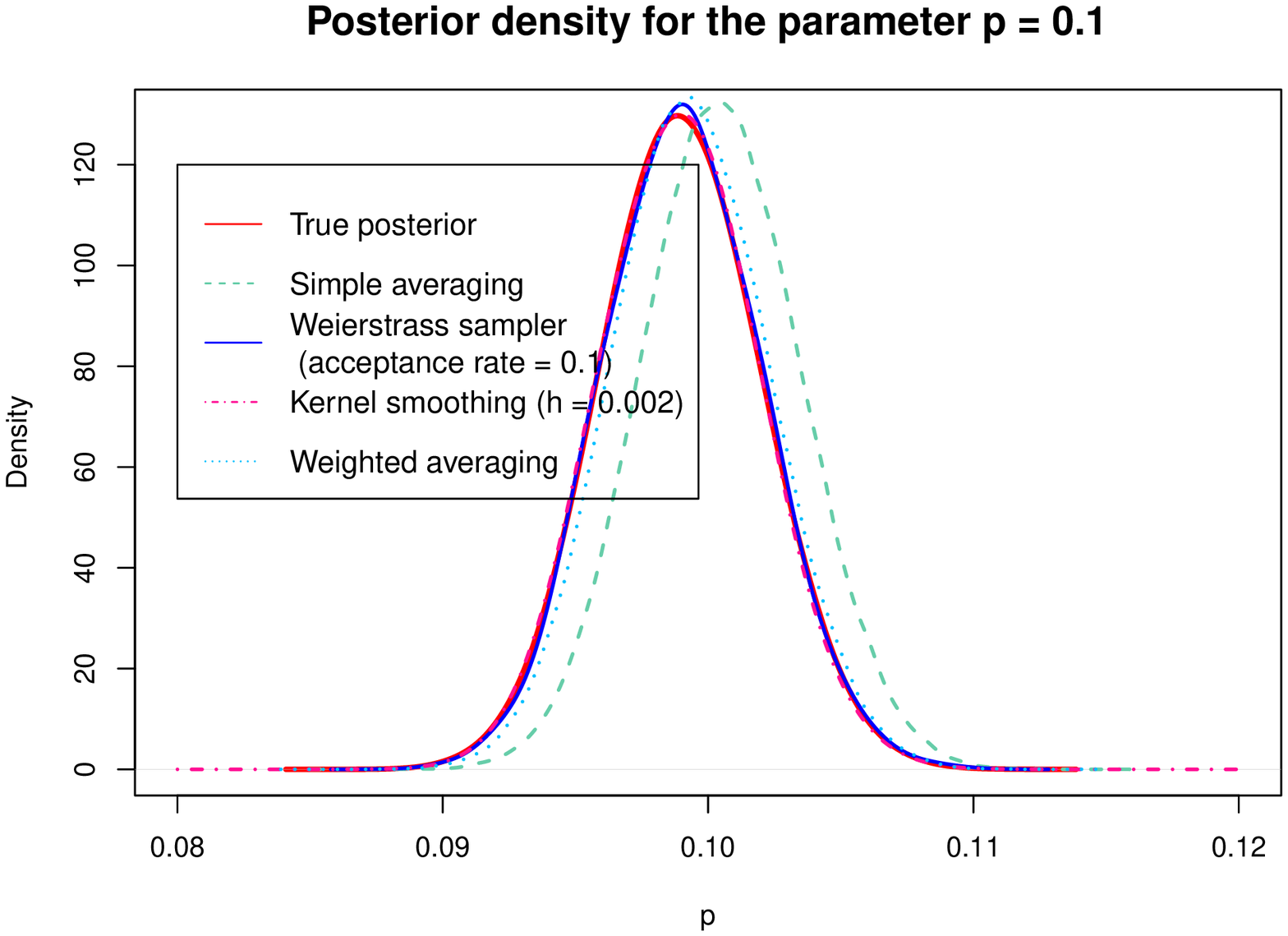}
\includegraphics[height = 6.1cm]{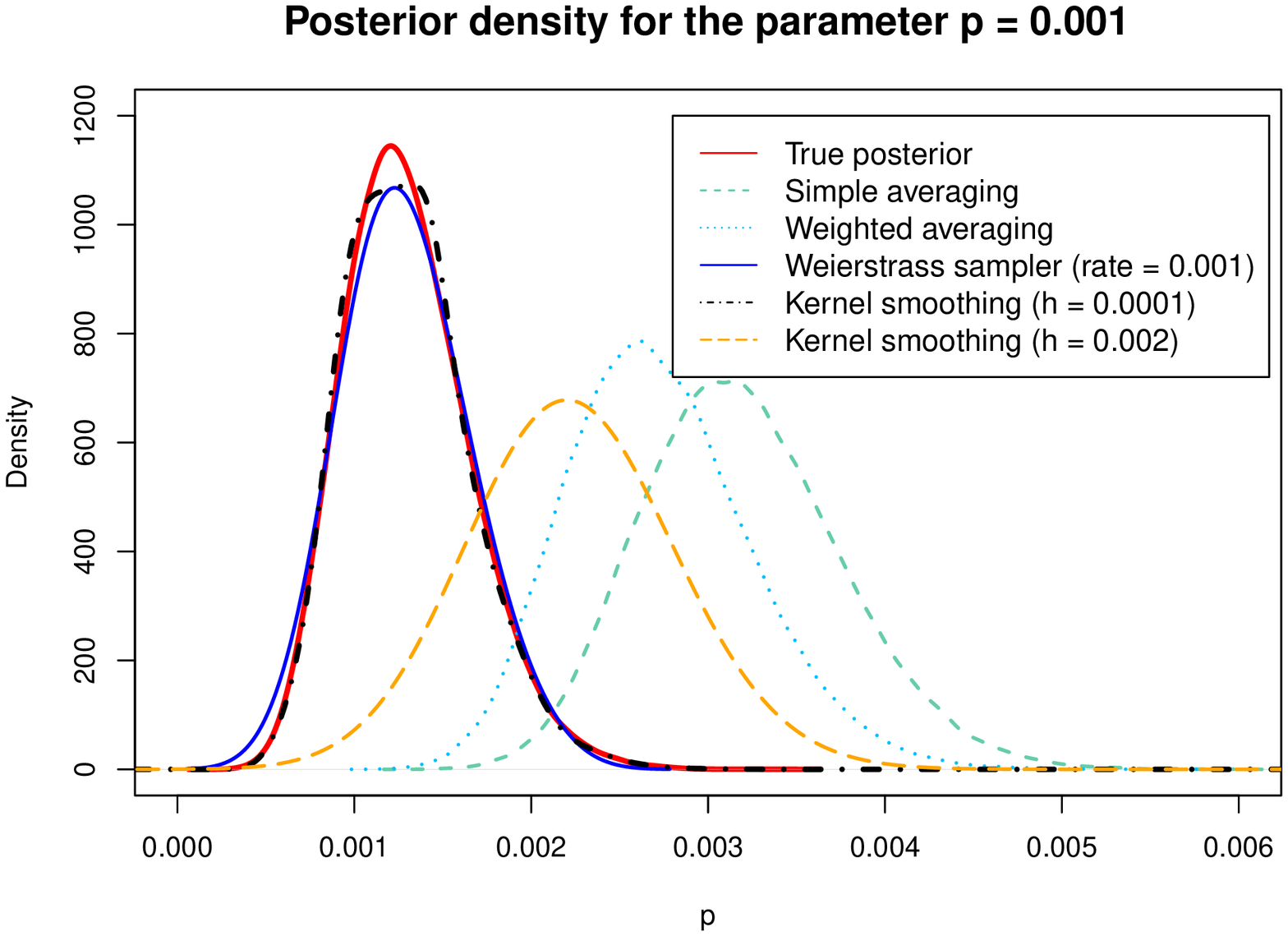}
  \caption{Posterior density for binomial data with $p = 0.1$ and $p = 0.001$. Top: the posterior densities for $p=0.1$. Bottom: the posterior densities for $p= 0.001$.}
 \label{fig:5}
\end{figure}
\par
It can be seen from Fig \ref{fig:5} that when the data set contains moderate amount of information (for $p = 0.1$), all methods work fine. However, for the inadequately informed case ($p = 0.001$), the kernel smoothing and Weierstrass rejection sampling are the only methods that perform appropriately. To match the scale of the posterior density, the rejection sampler adopts an acceptance rate of $0.1\%$. For kernel smoothing, it requires the bandwidth to be chosen as $h = 0.0001$ to achieve the same level of accuracy (See the result for larger $h$ in Fig.\ref{fig:5}).

\subsection{Computation efficiency}
Computational efficiency is a primary motivation for parallelization. Reductions in sample size lead to reduction in computational time in most cases. We demonstrate the effects of the total sample size and the number of parameters (or number of mixture components) for logistic regression and the mixture model in Fig \ref{fig:logit_time} and Fig \ref{fig:mix_time}. The subset number is fixed to 20. For each subset and the total set we drew 10,000 samples after 10,000 iterations burn-in. For Weierstrass refinement sampler, we repeat the procedure 10 times with 100 iterations within each step.
\begin{figure}[!htpb]
  \centering
  \includegraphics[height = 8.2cm]{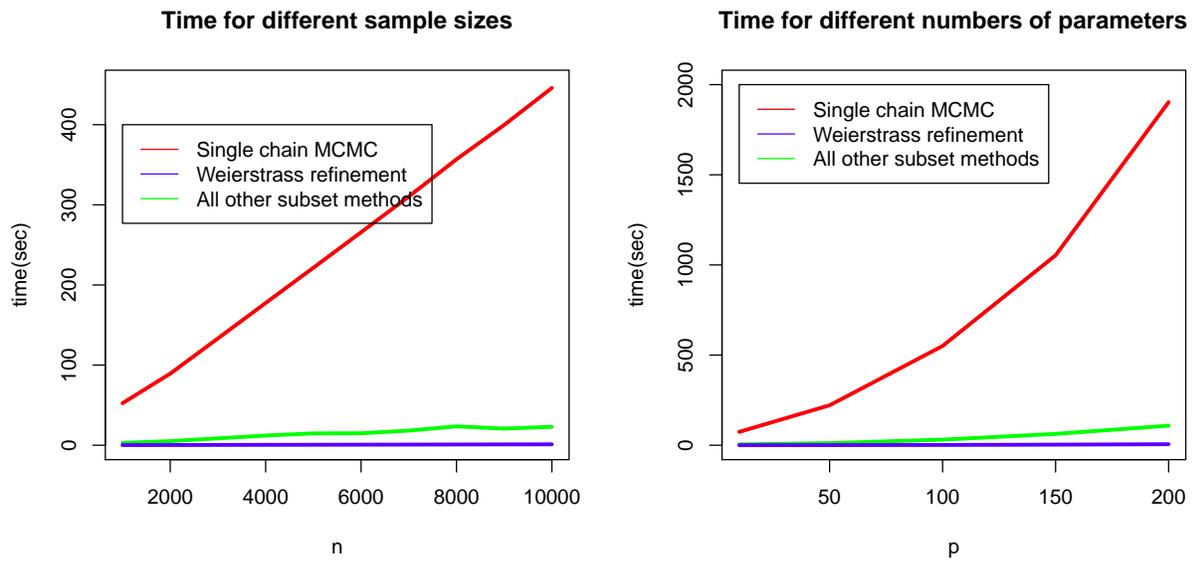}
  \caption{Computational time for logistic regression.}
  \label{fig:logit_time}
\end{figure}

\begin{figure}[!htpb]
  \centering
  \includegraphics[height = 8.2cm]{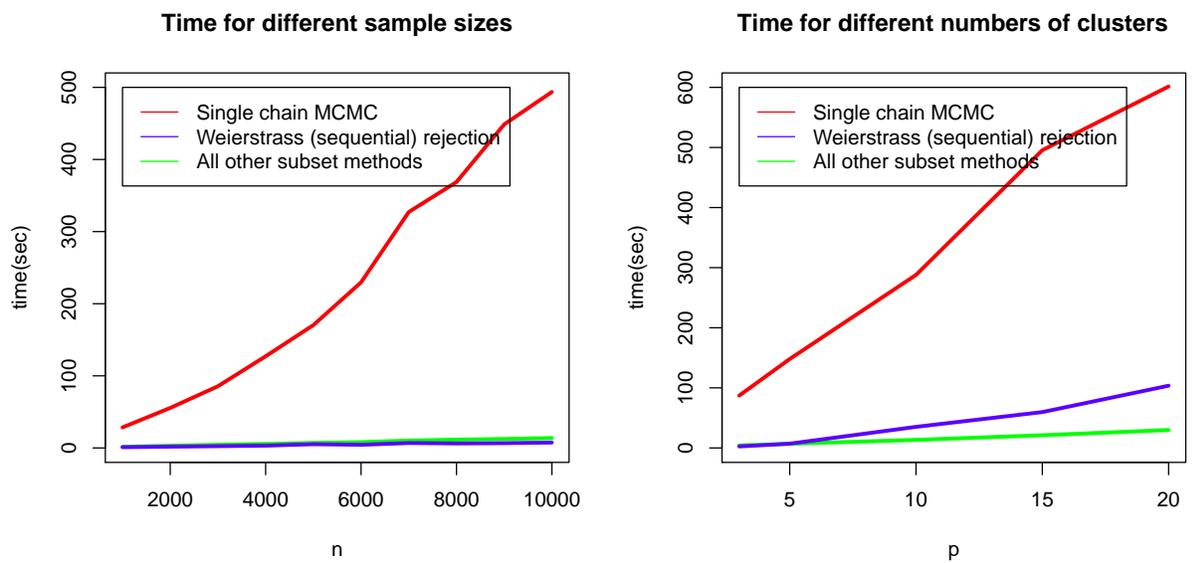}
  \caption{Computational time for the mixture model.}
  \label{fig:mix_time}
\end{figure}

\section{Real data analysis}
The real data set used in this section contains weighted census data extracted from the 1994 and 1995 current population surveys conducted by the U.S. Census Bureau \citep{Bache:Lichman:2013}. The purpose is to predict whether the individual's annual gross income will be higher than 50,000 USD (i.e., whether 50,000+ or 50,000-). The whole set contains around 200,000 observations, and 40 attributes which were turned into 176 predictors due to the re-coding of the categorical variables. Because the sample size is too large for fitting a logistic model via usual MCMC softwares such as JAGS or Rstan, we only illustrate the posterior inference for parallelized algorithms and the Laplacian approximation. In addition, because of the high dimension of the model and the sensitivity of kernel smoothing method, we only consider the marginal kernel combining algorithm and marginal Weierstrass rejection sampler. The latter will not be listed in the results as the performance is very similar to marginal kernel method.
\par
For posterior inference, we partition the data into 20 subsets, and drew 20,000 samples on each subset after 50,000 burn-in. For Weierstrass refinement sampler, the initial distribution is chosen to be slightly away from Laplacian approximation (See Fig \ref{fig:ref}, right figure) to avoid the potential unfair advantage. We conduct 10 step refinement on 2,000 initial draws, with 50 iterations within each step for proposing refined draws. The tuning parameter is chosen according to \eqref{eq:H0}. The Test Set 1 consists of 5,000 positive (50,000+) and 5,000 negative (50,000-) cases and the Test Set 2 contains 5,00 positive and 5,000 negative cases. Since the positive cases are rare (8\%) in the training set, we should expect different behaviors on the two different sets. For prediction assessment, the logistic model will predict positive if $logit^{-1}(E[x_i\beta])$ is greater than 0.5, and vice-versa. The posterior distribution of selected parameters for different methods were plotted in Fig \ref{fig:realdata}, and the prediction results for different categories and the two test data sets are listed in Table \ref{tab:realdata}.

\begin{figure}[!htpb]
  \centering
  \includegraphics[height = 7.2cm]{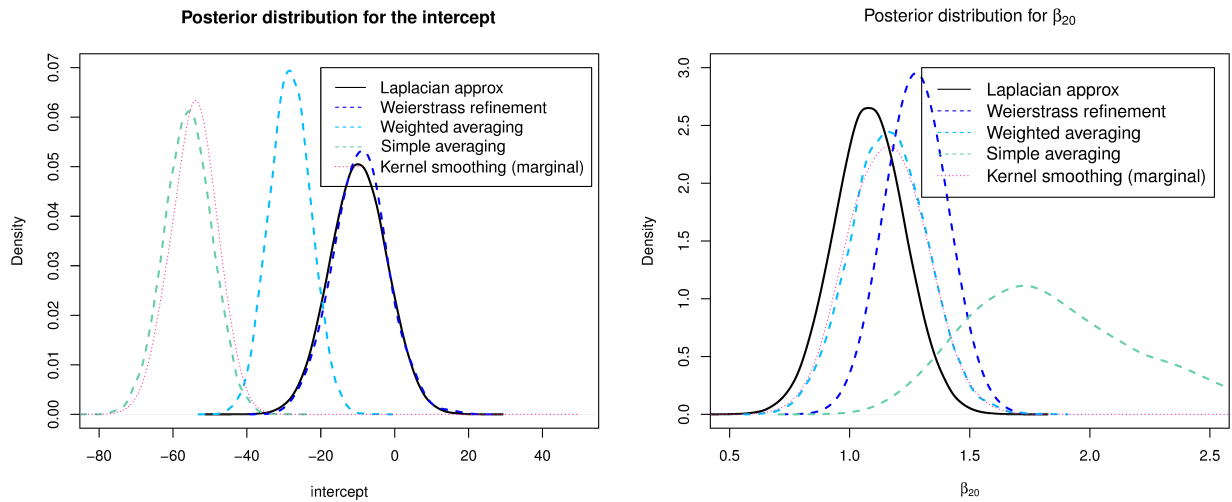}
  \caption{Posterior distribution for selected parameters.}
  \label{fig:realdata}
\end{figure}

\begin{table}[!htpb]
  \centering
\caption{Classification accuracy for test set}
  \begin{tabular}{l|cccc}
    \hline\hline
    Correctness on & 50,000- (\%) &  50,000+ (\%) & Test Set 1 (\%) & Test Set 2 (\%) \\
    \hline
    Weierstrass refinement & 97.0 & 57.9 & 77.4 & 93.0\\
    Laplacian approx  & 98.9 & 39.3 & 69.1 & 92.9\\
    Simple average & 98.9 & 39.4 & 69.0 & 92.9\\
    Weighted average & 99.0 & 39.4 & 69.2 & 93.0\\
    Kernel (marginal) &99.6 & 15.9 & 57.8 & 91.2\\
    \hline
  \end{tabular}
  \label{tab:realdata}
\end{table}
Because the positive category (annual income greater than 50,000 USD) is rare in the training set, Laplacian approximation is likely to overfit and the posterior might not be approximately Gaussian, leading to low accuracy in predicting positiveness for most methods. On the contrary, it can be seen that all methods perform well on the Test Set 2 (which mimics the ratio of the training set).

\section{Concluding remarks}
In this article, we proposed a new, flexible and efficient Weierstrass sampler for parallelizing MCMC. The Weierstrass sampler contains two different algorithms, which are carefully designed for different situations. Extensive numerical evidence shows that, compared to other methods in the same direction, Weierstrass sampler enjoys better performance in terms of approximation accuracy, chain mixing rate and a potentially faster speed. Faced with the same difficult issues, such as the dimensionality curse, Weierstrass sampler attempts to seize the balance in trading off between the accuracy and computation efficiency. As illustrated in the numerical study, the rejection sampler can not only well approximate the original MCMC, but also improve its performance in the posterior modes exploration. In the simulation, the sampler correctly identifies all {\color{black}the mixture components}, removing problems of the original Gibbs sampler.
\par
Future works of Weierstrass samplers may lie in the following aspects. First, investigating the asymptotic justification for the marginal combining strategy, which could help eliminate the dimensionality concern for both kernel density estimation method and Weierstrass rejection sampling. Second, investigating the potential application in parallel tempering. In parallel tempering, there is a temperature parameter $T$ which controls both the approximation accuracy and the chain exploration ability. Here, with the tuning parameter $h$, one is able to achieve the same thing: small $h$ entails a high accuracy, while large $h$ ensures a better exploration ability. {\color{black} Therefore, one could design a set of different values of $h$ for the sampling procedure, providing a `parallel' way of doing parallel tempering, which may potentially improve the performance of the original method.}

\section*{Appendix}
\subsection*{Proofs for preliminaries and Theorem 3}
\begin{lemma}\label{lemma:1}
  Assume a real-valued function $f$ is H\" older $\alpha$ differentiable with constant $C_0$, i.e., for $l = \lfloor \alpha \rfloor$, the $l$-th derivative of $f$ follows,
  \begin{align*}
    |f^{(l)}(\theta_1) - f^{(l)}(\theta_2)|\leq C_0|\theta_1 - \theta_2|^{\alpha - l}
  \end{align*}
for some positive constant $C_0$. Let $K(\cdot)$ be a $k$-th order kernel function satisfying that $\int K(\theta)dx = 1$, $ \int x^tK(\theta)dx = 0$ for $1\leq t\leq k-1$, and $\int |x^kK(\theta)|dx<\infty$. Defining the Weierstrass transform as 
\begin{align*}
  W_h^{(K)} f(\theta) = \int h^{-1}K\bigg(\frac{\theta-t}{h}\bigg)f(t)dt = \int K_h(\theta-t)f(t)dt,
\end{align*}
we have
  \begin{align*}
   \max_{\theta\in \mathcal{R}}|W_h^{(K)} f(\theta) - f(\theta)|\leq \frac{M_\gamma \kappa_\gamma(K)}{\lfloor \gamma \rfloor !}h^\gamma,
  \end{align*}
where $\gamma = \min\{\alpha, k\}$, $M_k = \max_{\theta\in \mathcal{R}} |f^{(k)}(\theta)|$, $M_\alpha = C_0$ and $\kappa_\gamma(K) = \int |t^\gamma K(t)|dt$.
\end{lemma}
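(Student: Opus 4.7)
The plan is to compare $W_h^{(K)}f(\theta)$ with $f(\theta)$ pointwise after a substitution, Taylor-expand $f$ around $\theta$ to order $l=\lfloor\gamma\rfloor$, and exploit the kernel moment conditions to annihilate the polynomial part, leaving only a remainder term that can be bounded uniformly in $\theta$.

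First I would change variables: putting $u=(\theta-t)/h$ and using $\int K(u)\,du=1$ gives
\begin{equation*}
W_h^{(K)}f(\theta)-f(\theta)=\int K(u)\bigl[f(\theta-hu)-f(\theta)\bigr]\,du.
\end{equation*}
Next, write $l=\lfloor\gamma\rfloor$ with $\gamma=\min\{\alpha,k\}$ and apply Taylor's theorem with Lagrange remainder at order $l$:
\begin{equation*}
f(\theta-hu)=\sum_{j=0}^{l-1}\frac{(-hu)^j}{j!}f^{(j)}(\theta)+\frac{(-hu)^l}{l!}f^{(l)}(\xi_{\theta,u}),
\end{equation*}
with $\xi_{\theta,u}$ lying between $\theta$ and $\theta-hu$. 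Subtract $f(\theta)$ and integrate against $K(u)$. Because $K$ is a $k$-th order kernel, $\int u^j K(u)\,du=0$ for $1\le j\le k-1$. Since $l\le k$ in both cases, all polynomial terms with $1\le j\le l-1$ vanish, and I am left with a single remainder integral.

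I would then split into the two natural cases. If $\gamma=k\le\alpha$, then $l=k$, the polynomial sum kills everything through $j=k-1$, and since $|f^{(k)}|\le M_k$,
\begin{equation*}
\bigl|W_h^{(K)}f(\theta)-f(\theta)\bigr|\le\frac{h^k}{k!}\int|u|^k|K(u)|\,du\cdot M_k=\frac{M_k\,\kappa_k(K)}{k!}\,h^k,
\end{equation*}
which is exactly the claimed bound with $\lfloor\gamma\rfloor!=k!$. If instead $\gamma=\alpha<k$ (so in particular $l\le k-1$), I rewrite the remainder as
\begin{equation*}
\int K(u)\frac{(-hu)^l}{l!}f^{(l)}(\xi_{\theta,u})\,du=\int K(u)\frac{(-hu)^l}{l!}\bigl[f^{(l)}(\xi_{\theta,u})-f^{(l)}(\theta)\bigr]\,du+\frac{(-h)^l}{l!}f^{(l)}(\theta)\int u^lK(u)\,du.
\end{equation*}
The second piece vanishes because $l\le k-1$, and the Hölder hypothesis gives $|f^{(l)}(\xi_{\theta,u})-f^{(l)}(\theta)|\le C_0|\xi_{\theta,u}-\theta|^{\alpha-l}\le C_0(h|u|)^{\alpha-l}$. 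Combining, $h^l\cdot h^{\alpha-l}=h^\alpha$ and $|u|^l\cdot|u|^{\alpha-l}=|u|^\alpha$, so
\begin{equation*}
\bigl|W_h^{(K)}f(\theta)-f(\theta)\bigr|\le\frac{C_0}{l!}\,h^\alpha\int|u|^\alpha|K(u)|\,du=\frac{M_\alpha\,\kappa_\alpha(K)}{\lfloor\alpha\rfloor!}\,h^\alpha.
\end{equation*}
Taking the supremum over $\theta$ yields the lemma.

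The only subtlety I anticipate is the finiteness of $\kappa_\gamma(K)=\int|u|^\gamma|K(u)|\,du$ in the Hölder case $\gamma=\alpha<k$: it follows from $\int|K|<\infty$ and $\int|u|^k|K(u)|\,du<\infty$ by splitting the integral at $|u|=1$ (using $|u|^\alpha\le 1$ on $|u|\le 1$ and $|u|^\alpha\le|u|^k$ on $|u|>1$). Everything else is routine Taylor-with-orthogonal-moments bookkeeping, and since the final bound is independent of $\theta$, taking the supremum is immediate.
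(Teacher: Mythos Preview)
Your proof is correct and follows essentially the same route as the paper's: change of variables, Taylor expansion to order $l=\lfloor\gamma\rfloor$ with Lagrange remainder, use of the kernel moment conditions to annihilate the polynomial terms, and the same two-case split ($\gamma=k$ bounding $|f^{(k)}|$ directly, $\gamma=\alpha$ subtracting $f^{(l)}(\theta)$ and invoking the H\"older condition). Your additional remark on why $\kappa_\gamma(K)<\infty$ when $\gamma=\alpha<k$ is a clean justification the paper leaves implicit.
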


\begin{proof}[\textbf{Proof of Lemma \ref{lemma:1}}]
The proof relies on the higher-order Taylor expansion on the function. If $\gamma = \alpha$, we have
\begin{align*}
  f(\theta+th) = f(\theta) + f'(\theta)th+\cdots+\frac{f^{(l-1)}(\theta)}{(l-1)!}t^{l-1}h^{l-1}+\frac{f^{(l)}(\tilde \theta)}{l!}t^{l}h^{l},
\end{align*}
where $l = \lfloor \alpha \rfloor$ and $\tilde \theta$ lies between $\theta$ and $\theta+th$. Because $l<k$, we thus have
\begin{align*}
 \int K(t)f(\theta+th)dt = \int K(t)f(\theta)dt + \int K(t)t^lh^l \frac{f^{(l)}(\tilde \theta)}{l!} = f(\theta)+ \int K(t)t^lh^l \frac{f^{(l)}(\tilde \theta)}{l!}
\end{align*}
and because $\int K(t)t^{l} = 0$,
\begin{align*}
  \bigg|\int K(t)t^lh^l& \frac{f^{(l)}(\tilde \theta)}{l!}\bigg| = \bigg|\int K(t)t^lh^l \frac{f^{(l)}(\tilde \theta)}{l!} - \int K(t)t^lh^l \frac{f^{(l)}(\theta)}{l!}\bigg| \\
&\leq l!^{-1}\int h^l|K(t)t^l||f^{(l)}(\tilde \theta) - f^{(l)}(\theta)|dt\leq C_0 l!^{-1}\int |t^\alpha K(t)| h^\alpha.
\end{align*}
Therefore,
\begin{align*}
  |W_h^{(K)} f(\theta) - f(\theta)|& = \bigg|\int K_h(\theta-t)f(t)dt - f(\theta)\bigg| = \bigg|\int K(t)f(\theta+th)dt - f(\theta)\bigg|\\
& \leq  C_0 l!^{-1}\int |t^\alpha K(t)| h^\alpha =  M_\gamma \lfloor \gamma \rfloor!^{-1}\int |t^\gamma K(t)| h^\gamma.
\end{align*}
The case $\gamma = k$ follows the same argument. Just notice that the Taylor expansion now becomes,
\begin{align*}
   f(\theta+th) = f(\theta) + f'(\theta)th+\cdots+\frac{f^{(k-1)}(\theta)}{(k-1)!}t^{k-1}h^{k-1}+\frac{f^{(k)}(\tilde \theta)}{k!}t^{k}h^{k},
\end{align*}
which entails that
\begin{align*}
  |W_h^{(K)}f(\theta) - f(\theta)|\leq M_k k!^{-1} \int |t|^k K(t) h^k = M_\gamma \lfloor \gamma \rfloor!^{-1} \int |t^\gamma K(t)| h^\gamma,
\end{align*}
where $M_k = \max_{\theta\in R} |f^{(k)}(\theta)|$, 
and thus completes the proof.
\end{proof}

\par
In this article, we only focus on the case when $K$ is chosen to be a density function (second order kernel function), and thus $\gamma\leq 2$. The result stated in Lemma \ref{lemma:1} can be naturally generalized to the multivariate case.

\begin{lemma}\label{lemma:2}
  Let $f$ be a real-valued function on $\mathcal{R}^p$. Define the multivariate Weierstrass transform as
  \begin{align*}
    W_h^{(\textbf{K})}f(\theta_1,\cdots,\theta_p) = \int f(t_1,\cdots,t_p)\prod_{j=1}^p h_j^{-1}K_j\bigg(\frac{\theta_j-t_j}{h_j}\bigg)dt_j.
  \end{align*}
If $f$ is H\"older $\alpha$ smooth with a constant $C_0$, i.e., for $l = \lfloor \alpha \rfloor$, $f$ is $l$-th differentiable and for all the $l$-th derivatives of $f$, we have
\begin{align*}
  |f^{(l)}_{i_1i_2\cdots i_l}(\theta_1,\cdots,\theta_p) - f^{(l)}_{i_1i_2\cdots i_l}(\theta'_1,\cdots,x'_p)|\leq \sum_{j=1}^p C_0|\theta_j - x'_j|^{\alpha-l} \qquad \forall  x,x'\in \mathcal{R}^p.
\end{align*}
Assuming $K_i$s are all $k$-th order kernels,  the approximation error of the Weierstrass transform follows
\begin{align*}
 \max_{\theta\in\mathcal{R}^p} |W_h^{(\textbf{K})}f(\theta) - f(\theta)|\leq \frac{M_\gamma}{\lfloor \gamma \rfloor!}\sum_{j=1}^p \kappa_\gamma(K_j) h_j^\gamma,
\end{align*}
where $\gamma$, $M_\gamma$ and $\kappa_\gamma(\cdot)$ are defined in Lemma \ref{lemma:1}.
\end{lemma}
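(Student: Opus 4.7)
The plan is to reduce Lemma 2 to the univariate Lemma 1 via a coordinate-wise telescoping argument. By Fubini, the product-kernel structure of $W_h^{(\mathbf{K})}$ factors as a composition of one-dimensional partial transforms acting on each coordinate separately,
\begin{equation*}
  W_h^{(\mathbf{K})} f \;=\; W_{h_1}^{(K_1)} W_{h_2}^{(K_2)} \cdots W_{h_p}^{(K_p)} f,
\end{equation*}
where $W_{h_j}^{(K_j)}$ convolves in the $j$-th argument while leaving the remaining arguments fixed as parameters. Setting $g_0 = f$ and $g_j = W_{h_1}^{(K_1)} \cdots W_{h_j}^{(K_j)} f$, the error admits the telescoping decomposition
\begin{equation*}
  W_h^{(\mathbf{K})} f(\theta) - f(\theta) \;=\; \sum_{j=1}^p \bigl( g_j(\theta) - g_{j-1}(\theta) \bigr).
\end{equation*}

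For each term in this sum I would apply Lemma 1 to the univariate slice $t_j \mapsto g_{j-1}(\theta_1, \ldots, \theta_{j-1}, t_j, \theta_{j+1}, \ldots, \theta_p)$, with the other coordinates treated as frozen parameters. The univariate lemma then yields
\begin{equation*}
  |g_j(\theta) - g_{j-1}(\theta)| \;\leq\; \frac{M_\gamma^{(j)}}{\lfloor \gamma \rfloor !}\,\kappa_\gamma(K_j)\,h_j^\gamma,
\end{equation*}
where $M_\gamma^{(j)}$ is either the sup of the pure derivative $\partial^{\lfloor \gamma \rfloor} g_{j-1}/\partial\theta_j^{\lfloor \gamma \rfloor}$ (when $\gamma = k$) or the Hölder constant of that derivative in $\theta_j$ (when $\gamma = \alpha$). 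Summing over $j$ by the triangle inequality reproduces the claimed bound, provided each $M_\gamma^{(j)}$ is dominated by the global constant $M_\gamma$ featured in the statement.

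The main obstacle is this last domination, since the function being differentiated is no longer $f$ itself but a partially smoothed version. Because $g_{j-1}$ is the convolution of $f$ in the first $j-1$ coordinates with the probability-density kernels $K_1, \ldots, K_{j-1}$ (the paper's standing restriction to density-valued second-order kernels), differentiation in $\theta_j$ commutes with these convolutions, giving
\begin{equation*}
  \frac{\partial^l g_{j-1}}{\partial \theta_j^l}(\theta) \;=\; \bigl( W_{h_1}^{(K_1)} \cdots W_{h_{j-1}}^{(K_{j-1})} \bigr) \Bigl[ \tfrac{\partial^l f}{\partial \theta_j^l} \Bigr](\theta), \qquad l = \lfloor \gamma \rfloor.
\end{equation*}
Since $K_i \geq 0$ with $\int K_i = 1$, convolution against each $K_i$ is an $L^\infty$ contraction, and because the integration variables are disjoint from $\theta_j$, passing the Hölder difference under the integral shows the Hölder modulus in $\theta_j$ is preserved as well. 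Hence $\sup|\partial^l g_{j-1}/\partial \theta_j^l| \leq M_k$ when $\gamma = k$, and the Hölder constant of the $l$-th derivative in $\theta_j$ remains bounded by $C_0$ when $\gamma = \alpha$; in either case $M_\gamma^{(j)} \leq M_\gamma$, and the stated bound follows.
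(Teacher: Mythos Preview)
Your proposal is correct and matches what the paper has in mind: the paper omits the proof entirely, remarking only that it ``is essentially an application of Lemma~1,'' and your coordinate-wise telescoping is precisely such an application. The one point worth flagging is that your domination step $M_\gamma^{(j)}\le M_\gamma$ relies on the kernels being nonnegative densities (so that partial convolution is an $L^\infty$ contraction and preserves H\"older moduli); this is exactly the standing restriction the paper announces immediately after Lemma~1, so the argument covers the cases the paper actually uses.
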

The proof of Lemma \ref{lemma:2} is essentially an application of Lemma \ref{lemma:1} and is omitted. With Lemma 1 we proceed to prove Theorem 1. The original Theorem 1 is stated in terms of second-order differentiable function and second-order kernels. Here, we provide a more general version of Theorem 1.

{\em \textbf{Theorem 1.} If the posterior densities and the kernel functions satisfy the condition in Lemma 1 with $\alpha$ and $k$, then the distribution defined in \eqref{eq:approx} is proper and there exists a positive constant $c_0$ such that when $h^\gamma = \sum_{i=1}^m h_i^\gamma\leq c_0^{-1}$ for $\gamma = \min\{\alpha, k\}$, the total variation distance between the posterior distribution and the approximation follows
  \begin{align*}
    \|f - \tilde f\| = \bigg\| C^{-1}\prod_{i=1}^m f_i(\theta) -C_W^{-1} \prod_{i=1}^m W_{h_i}^{(K)} f_i(\theta) \bigg\|\leq 2r_0r_1^{-\gamma}h^\gamma,
  \end{align*}
where $C$ and $C_W$ are the normalizing constants, and $r_0, r_1$ are defined as
\begin{align*}
  r_0  = C^{-1}M\max_{i\in\{1,2,\cdots,m\}}\int \prod_{j\neq i}^{m} f_j(\theta)dx \qquad  r_1^\gamma = \frac{2M}{M_2 \kappa_\gamma(K)}.
\end{align*}
}

\begin{proof}[\textbf{Proof of Theorem \ref{thm:1}}]
  We merge $r_1$ into $h_i$ to simplify the notation. With this modification the result can be expressed as,
  \begin{align*}
    \|f - \tilde f\| = \frac{1}{2}\bigg\| C^{-1}\prod_{i=1}^m f_i(\theta) -C_W^{-1} \prod_{i=1}^m W_{r_1h_i}^{(K)} f_i(\theta) \bigg\|_{L_1}\leq 2r_0h^\gamma,
  \end{align*}
 The derivation is divided into two steps. In the first step, we obtain an estimate of the difference between the two products $\|\prod f_i(\theta) - \prod W_{r_1h_i}^{(K)}f_i(\theta)\|$, and then apply it in the second step to bound the total variation distance.
\par
A typical way to quantify the difference between two products is to decompose it into sums of relative differences, i.e.,
\begin{align*}
  \int \bigg|\prod_{i=1}^m &f_i(\theta) - \prod_{i=1}^m W_{r_1h_i}^{(K)} f_i(\theta)\bigg|dx \leq \sum_{i=1}^m \int |f_i(\theta) - W_{r_1h_i}^{(K)} f_i(\theta)| \prod_{j=1}^{i-1} f_j(\theta) \prod_{j=i+1}^m W_{r_1h_i}^{(K)}f_j(\theta) dx.
\end{align*}
Define $C_k = \max_{ I\subset \{1,2,\cdots,m\},|I| = k} \int \prod_{j\in I} f_j(\theta) dx$ and $\delta_k = \max_{I'\subset I\subset\{1,2,\cdots,m\},|I| = k} \|\prod_{j\in I}f_j(\theta) - \prod_{j\in I'} W_{r_1h_j}^{(K)}f_j(\theta)\prod_{j\in I\backslash I'} f_j(\theta)\|_{L_1}$, and notice that 
\begin{align*}
  \int \prod_{j=1}^{i-1} f_j(\theta)\prod_{j=i+1}^m W_{r_1h_j}^{(K)} f_j(\theta) dx& \leq \int \prod_{j\neq i} f_j(\theta) +  \bigg\| \prod_{j=1}^{i-1} f_j(\theta)\prod_{j=i+1}^m W_{r_1h_j}^{(K)} f_j(\theta) dx - \prod_{j\neq i} f_j(\theta)\bigg\|_{L_1}\\
&\leq  C_{m-1} + \delta_{m-1}.
\end{align*}
we can then bound each relative difference as
\begin{align*}
  \int |&f_i(\theta) - W_{r_1h_i}^{(K)} f_i(\theta)| \prod_{j=1}^{i-1} f_j(\theta) \prod_{j=i+1}^m W_{r_1h_i}^{(K)}f_j(\theta) dx\\
&\leq \max_{\theta\in\mathcal{R}} |f_i(\theta) - W_{r_1h_i}^{(K)}f_i(\theta)|\int \prod_{j=1}^{i-1} f_j(\theta) \prod_{j=i+1}^m W_{r_1h_i}^{(K)} f_j(\theta) dx\\
&\leq \frac{M_\gamma\kappa_\gamma(K)}{\lfloor \gamma \rfloor !}r_1^\gamma h_i^\gamma \cdot \int \prod_{j=1}^{i-1} f_j(\theta)\prod_{j=i+1}^m W_{r_1h_j}^{(K)} f_j(\theta) dx\\
&\leq M(C_{m-1}+\delta_{m-1})h_i^\gamma.
\end{align*}
Summing over all $i\in\{1,2,\cdot,m\}$, we have
\begin{align*}
 \int \bigg|\prod_{i=1}^m &f_i(\theta) - \prod_{i=1}^m W_{r_1h_i}^{(K)} f_i(\theta)\bigg|dx \leq MC_{m-1}h^\gamma+M\delta_{m-1}h^\gamma,
\end{align*}
where $h^\gamma = \sum_{i=1}^m h_i^\gamma$. Using the same trick to decompose $\delta_k$ entails that (noticing that $\sum_{i\in I\subset\{1,2,\cdots,m\}} h_i^\gamma\leq h^\gamma$),
\begin{align*}
  \delta_k\leq MC_{k-1}h^\gamma+ M\delta_{k-1}h^\gamma \qquad \forall k\in\{3,4,\cdots,m-1\}
\end{align*}
and for $\delta_2$, a direct computation shows that $\delta_2\leq M$ if all $h_i^\gamma\leq 1/2$.
 Defining $c_0 = \max\{\frac{M^2}{2C_3}, \frac{MC_k}{2C_{k+1}}, 2\leq k\leq m-1\}$, by mathematical induction, it is easy to verify that for $h^\gamma\leq c_0^{-1}$ and $h_i^\gamma\leq 1/2$,
\begin{align*}
  \delta_k\leq C_k \qquad\forall k\in \{3,\cdots, m-1\}
\end{align*}
and therefore we have
\begin{align}
  \int \bigg|\prod_{i=1}^mf_i(\theta) - \prod_{i=1}^m W_{r_1h_i}^{(K)} f_i(\theta)\bigg|dx\leq 2MC_{m-1}h = 2Cr_0h^\gamma \label{eq:thm1.2}.
\end{align}
An application of \eqref{eq:thm1.2} can help deriving the difference between the two normalizing constants,
\begin{align*}
  |C - C_W| \leq \int \bigg|\prod_{i=1}^m f_i(\theta) - \prod_{i=1}^m W_{r_1h_i}^{(K)} f_i(\theta)\bigg|dx \leq  2Cr_0h^\gamma.
\end{align*}
We then bound the $L_1$ distance. For $h_i^\gamma\leq 1/2$ and $h^\gamma\leq c_0^{-1}$ we have
\begin{align*}
  \int &\bigg|C^{-1}\prod_{i=1}^m f_i(\theta) -C_W^{-1} \prod_{i=1}^m W_{r_1h_i}^{(K)} f_i(\theta)\bigg|dx\\
&\leq C^{-1}\int \bigg|\prod_{i=1}^m f_i(\theta) - \prod_{i=1}^m W_{r_1h_i}^{(K)} f_i(\theta)\bigg|dx + |C^{-1} - C_W^{-1}|\int \prod_{i=1}^m W_{r_1h_i}^{(K)} f_i(\theta)dx\\
&\leq 2r_0h^\gamma + \frac{|C - C_W|}{C_WC}\cdot C_W\\
&\leq 4r_0h^\gamma,
\end{align*}
which gives the error bound stated in the theorem.
\end{proof}

\par
\textbf{\em Remarks:} Theorem 1 is a special case of the above theorem with $\alpha \ge 2$ and $k = 2$. Though the features of Weierstrass sampler will not rely on asymptotics (no requirement on the sample size), a rough asymptotic analysis on $r_0$ and $r_1$ can provide a general idea on their magnitudes and behavior with the change of sample size.
\par
If the likelihood function satisfies certain regularity conditions, local asymptotic normality will ensure the posterior density converges to a normal density both pointwise and in $L_1$. Replacing the posterior by its asymptotic distribution, i.e., substituting $f_i(\theta)$ with $N(\hat \theta_i, \Sigma_{n/m}) = N(\hat \theta_i, mI^{-1}/n)$, where $I$ is the Fisher's information matrix, and $\hat \theta_i$s are locally consistent estimators which satisfy that $\hat \theta_i - \theta_{true} \sim  N(0, \Sigma_{n/m})$ asymptotically,  we have that
\begin{align*}
  C_k&\approx \int \prod_{i=1}^k dN(\hat \theta_i, mI^{-1}/n)\\
 & = (2\pi)^{-(k-1)p/2}|\Sigma_{n/m}|^{-(k-1)/2}k^{-p/2}\exp\bigg[ -\frac{k}{2}\big\{ k^{-1}\sum_{i=1}^k( \hat \theta_i - \bar \theta)^T\Sigma_{n/m}^{-1}( \hat \theta_i - \bar \theta)\big\}\bigg],
\end{align*}
where $\bar \theta = k^{-1}\sum_{i=1}^k \hat \theta_i$. Since $\hat \theta_i$ follows $N(\theta_{true}, \Sigma_{n/m})$ roughly, we can replace $k^{-1}\sum_{i=1}^k( \hat \theta_i - \bar \theta)^T\Sigma_{n/m}^{-1}( \hat \theta_i - \bar \theta)$ by $p$ for large values of $k$ and obtain a more concise approximation as
\begin{align*}
  C_k\approx (2\pi)^{-(k-1)p/2}|\Sigma_{n/m}|^{-(k-1)/2}k^{-p/2}\exp\bigg(-\frac{pk}{2}\bigg).
\end{align*}
The above approximation suggests that $MC_{k}/C_{k+1}\approx (e+e/k)^{p/2}$, where $M\approx (2\pi)^{-p/2}|\Sigma_{n/m}|^{-1/2}$.
Therefore, the following approximations hold asymptotically
\begin{align*}
 r_0 \approx \bigg(e+\frac{e}{m-1}\bigg)^{p/2} \quad\mbox{and}\quad c_0^{-1}\geq 2(2e)^{-p/2}.
\end{align*}
To quantify $r_1$ requires more elaborate analysis, because the result depends on the choice of kernel functions. If the kernel function is a density function with $\gamma =  2$ (sufficiently smooth likelihood), $r_1 = O(\sqrt{M/M_2})$.

\par
\begin{proof}[\textbf{Proof of Theorem 3}]
 By definition we have,
  \begin{align*}
    \int_{-\infty}^x g_i(t)dt& = P\bigg\{ \theta_i\leq x~|~ c^{-m+1}\prod_{k\neq i}^m K \bigg(\frac{\theta_k-\theta_i}{h_k}\bigg) \geq u\bigg\}\\
&= C'\int_{-\infty}^x f_i(t_i) \int \prod_{k\neq i}^m \frac{1}{r_1h_k} K (\frac{t_k-t_i}{h_k}) f_k(t_k) \prod_{k=1}^m dt_k,
  \end{align*}
where $C'= (r_1/c)^{m-1}\prod_{k\neq i}h_k$. As a result, we have,
\begin{align*}
  g_i(\theta) = C'f_i(\theta)\prod_{k\neq i}^m \int \frac{1}{h_k} K \bigg(\frac{t_k-x}{r_1h_k}\bigg) f_k(t_k) dt_k = C'f_i(\theta)\prod_{k\neq i} W_{h_k}^{(K)}f_k(\theta).
\end{align*}
The right hand side is essentially the same as the expression $\prod_{i=1}^m W_{h_i}^{(K)} f_i(\theta)$ in Theorem 1, except for the $i^{th}$ term. Therefore, following the same argument in Theorem 1, we can prove that
\begin{align*}
 \bigg\|g_i(\theta) - C^{-1}\prod_{k=1}^m f_k(\theta)\bigg\|_{L_1}\leq 4r_0r_1^{-2}\sum_{k\neq i} h_k^2.
\end{align*}
\end{proof}

\subsection*{Proof of Theorem 2}
To prove Theorem 2 we first need three lemmas with proofs provided immediately after the statement of lemma. It is worth noting that the Lemma \ref{lemma:2.2} is a common tool used in showing geometric ergodicity of MCMC algorithms \citep{Johnson:2009}. The original proof used the common coupling inequality trick, while here we will adopt a different approach to derive a slightly different conclusion that is more useful for this paper.
\begin{lemma}\label{lemma:2.1}
  Assume $f(x)$ and $f_0(x)$ are two continuous density functions. For any $p_0\in (0, 1)$, there always exists a bounded measurable set $D$ such that
  \begin{align*}
    \int_D f(x)dx = \int_D f_0(x)dx \quad\mbox{and}\quad \int_D f(x)dx >p_0.
  \end{align*}

  \begin{proof}[\textbf{Proof of Lemma \ref{lemma:2.1}}]
    Let $A^+ =\{x: f(x)>f_0(x)\}, A^- = \{x: f(x)<f_0(x)\}, A = \{x: f(x) = f_0(x)\}$. Define the following functions
    \begin{align*}
      H^+(t) = \int_{A^+\cap \{|x|\leq t\}} f(x),\quad H^-(t) = \int_{A^-\cap \{|x|\leq t\}} f(x),\quad H(t) = \int_{A\cap \{|x|\leq t\}} f(x)
    \end{align*}
and
\begin{align*}
  G^+(t) = \int_{A^+\cap\{|x|\leq t\}} f(x) - f_0(x), \qquad G^-(t) = \int_{A^-\cap\{|x|\leq t\}} f_0(x) - f(x).
\end{align*}
Because $f,f_0$ are continuous, the above functions are all smooth and non-decreasing functions for $t\geq 0$, and satisfy that
\begin{align*}
  H^+(\infty) + H^-(\infty) + H(\infty) = 1,\qquad G^+(\infty) = G^-(\infty) = \|f - f_0\|.
\end{align*}
From the assumption that $\|f - f_0\|>0$ (otherwise the result is trivial), we are guaranteed that $H^+(\infty)$ is positive. The following proof is divided into two parts.
\par
{\em Case 1:} If $H^-(\infty)>0$.
Define $\epsilon = (1 - p_0)/3$. Due to the continuity and monotonic property of $H^+(\cdot), H^-(\cdot)$ and $H(\cdot)$, we can always find $t_1$ and $t_2$ such that
\begin{align*}
  H^+(\infty) - \epsilon < H^+(t_1) < H^+(\infty), \qquad  H^-(\infty) - \epsilon < H^-(t_2) < H^-(\infty) 
\end{align*}
and $t_3$ such that
\begin{align*}
H(\infty) - \epsilon < H(t_3) \leq H(\infty).
\end{align*}
\par
 Now if $G^+(t_1) = G^-(t_2)$, we can simply set $D = A^+\cap\{|x|\leq t_1\}\bigcup A^-\cap\{|x|\leq t_2\}\bigcup A\cap\{|x|\leq t_3\}$. Otherwise, without loss of generality, assuming $G^+(t_1) > G^-(t_2)$, due to the choice of $t_1$, we know that $H^+(t_1)< H^+(\infty)$, which ensures
 \begin{align*}
   G^-(t_2) < G^+(t_1)< G^+(\infty) = G^-(\infty).
 \end{align*}
Again making use of the continuity and the property of limit, we are able to find $t_2'>t_2$ such that
\begin{align*}
  G^+(t_1) = G^-(t_2')
\end{align*}
and $D$ is then taken to be  $D = A^+\cap\{|x|\leq t_1\}\bigcup A^-\cap\{|x|\leq t_2'\}\bigcup A\cap\{|x|\leq t_3\}$
\par
{\em Case 2:} If $H^-(\infty) = 0$. This case is even simpler. Define $\epsilon = (1- p_0)/2$. Since $H^+(\infty)>0$, there exist $t_1$ and $t_3$ such that,
\begin{align*}
  H^+(\infty) - \epsilon < H^+(t_1) < H^+(\infty),\qquad H(\infty) - \epsilon < H(t_3) \leq H(\infty),
\end{align*}
which at same time guarantees that $G^+(t_1)< G^+(\infty)$. Therefore, by a similar argument to Case 1, we will be able to find $t_2$ such that
\begin{align*}
  G^+(t_1) = G^-(t_2).
\end{align*}
Then $D =  A^+\cap\{|x|\leq t_1\}\bigcup A^-\cap\{|x|\leq t_2\}\bigcup A\cap\{|x|\leq t_3\}$. It is easy to verify that the set $D$ defined in the proof satisfies the properties in the lemma.
  \end{proof}

\end{lemma}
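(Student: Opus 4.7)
The plan is to construct $D$ as a union of three pieces, one drawn from each of the regions
$A^+ = \{f > f_0\}$, $A^- = \{f < f_0\}$, and $A^0 = \{f = f_0\}$, then truncate each piece to a ball of finite radius so the resulting set is bounded. The equality condition $\int_D f = \int_D f_0$ is equivalent to
\[
\int_{D\cap A^+}(f - f_0)\,dx \;=\; \int_{D\cap A^-}(f_0 - f)\,dx,
\]
since the $A^0$ portion cancels automatically. So the core task is to pick truncations inside $A^+$ and $A^-$ whose ``excess masses'' balance, and then inflate the $A^0$ piece to boost the total mass above $p_0$.

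First I would dispense with the trivial case $f \equiv f_0$, where any large enough ball works. Otherwise, defining for $R \ge 0$
\[
\phi^+(R) = \int_{A^+\cap B_R}(f - f_0)\,dx, \qquad \phi^-(R) = \int_{A^-\cap B_R}(f_0 - f)\,dx,
\]
where $B_R$ is the ball of radius $R$, both $\phi^\pm$ are continuous and nondecreasing in $R$. Because $\int(f - f_0)\,dx = 0$, they share a common positive limit $L = \tfrac{1}{2}\|f - f_0\|_{L_1}$ as $R \to \infty$. Setting the tolerance $\varepsilon = (1-p_0)/3$, I would pick $R_1$ with $\phi^+(R_1) \in (L - \varepsilon', L)$ for sufficiently small $\varepsilon'$; since $\phi^+(R_1) < L = \phi^-(\infty)$, the intermediate value theorem yields $R_2$ with $\phi^-(R_2) = \phi^+(R_1)$. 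I would also choose $R_3$ large enough that $\int_{A^0\cap B_{R_3}} f \ge \int_{A^0} f - \varepsilon$.

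With $D = (A^+ \cap B_{R_1}) \cup (A^- \cap B_{R_2}) \cup (A^0 \cap B_{R_3})$, boundedness and measurability are immediate, and the excess-balancing equation gives $\int_D f = \int_D f_0$. For the mass bound, by choosing $R_1, R_2, R_3$ large enough each of $\int_{A^+ \cap B_{R_1}} f$, $\int_{A^- \cap B_{R_2}} f$, and $\int_{A^0 \cap B_{R_3}} f$ lies within $\varepsilon$ of its limit, so $\int_D f \ge 1 - 3\varepsilon = p_0$, and a slight enlargement gives the strict inequality.

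I expect the main obstacle to be the degenerate case in which one of the regions $A^+$ or $A^-$ has vanishing mass under both densities after a certain radius --- for instance, if $A^-$ is empty. In that case $\phi^-(\infty) = 0$ forces $\phi^+(\infty) = 0$, hence $f = f_0$ almost everywhere, reducing to the trivial case; but one needs to verify that there is no pathological configuration where the intermediate value argument fails because $\phi^\pm$ jumps (which is ruled out by continuity of $f$ and $f_0$, giving openness of $A^\pm$ and absolute continuity of $\phi^\pm$). The rest is careful bookkeeping of the three $\varepsilon$-tolerances so that the balance equation and the mass bound are simultaneously satisfied.
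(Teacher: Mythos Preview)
Your approach is essentially the same as the paper's: partition into $A^+,A^-,A^0$, truncate each piece to a ball, and use the intermediate value theorem on the continuous, monotone excess-mass functions $\phi^\pm$ (which the paper calls $G^\pm$) to enforce $\int_D f=\int_D f_0$, while the truncation radii are chosen large enough to push $\int_D f$ past $p_0$. The paper also introduces $H^\pm(t)=\int_{A^\pm\cap B_t} f$ to track the $f$-mass directly, and its case split ($H^-(\infty)>0$ versus $=0$) plays the role of your degenerate-region discussion.

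One point of bookkeeping deserves care, and the paper's ordering handles it more cleanly than your sketch. You fix $R_1$ first and then let the IVT hand you $R_2$ with $\phi^-(R_2)=\phi^+(R_1)$; but $\phi^-(R_2)$ being close to $L$ does \emph{not} force $\int_{A^-\cap B_{R_2}} f$ to be close to $\int_{A^-} f$ (the difference $f_0-f$ could concentrate near the origin while $f$ itself carries mass on $A^-$ far out). So the $R_2$ produced by IVT may fail the mass bound. The paper avoids this by choosing $t_1,t_2$ \emph{first} so that $H^+(t_1),H^-(t_2)$ are within $\varepsilon$ of their limits, and only then comparing $G^+(t_1)$ with $G^-(t_2)$: whichever is smaller gets its radius increased to achieve balance, and since $H^\pm$ are nondecreasing the mass bound is preserved. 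Adopting that order of choices closes the gap you flagged as ``careful bookkeeping.''
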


\begin{lemma}\label{lemma:2.2}
  Assume the Gibbs sampler defines a transition kernel $\kappa(\cdot, x)$,
  \begin{align*}
    P(A,x) = P(X_t\in A|X_{t-1} = x) = \int_A \kappa(t,x)dt.
  \end{align*}
 Let $f(x)$ denote the equilibrium distribution and $f_0(x)$ the approximation to $f(x)$. For any measurable set $D$ which satisfies that 
\begin{align*}
  \int_D f_0(x)dx = \int_D f(x)dx,
\end{align*}
if there exists a probability density $q$ and a positive value $\epsilon$ such that
  \begin{align*}
    \kappa(t,x) \geq \epsilon q(t)
  \end{align*}
for any $t\in \mathcal{R}$ and $x \in D$, then we have
\begin{align*}
  \int |f_1(x) - f(x)|dx \leq (1-\epsilon)\int_D |f_0(x) - f(x)|dx + \int_{D^c} |f_0(x) - f(x)|dx,
\end{align*}
where $f_1(x) = \int \kappa(x,t)f_0(t)dt$. If $D = \mathcal{R}$, the conclusion becomes,
\begin{align*}
  \|f_1(x) - f(x)\| \leq (1 - \epsilon) \|f_0(x) - f(x)\|,
\end{align*}
where $\|\cdot\|$ denots the total variation distance.
\end{lemma}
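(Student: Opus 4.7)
The plan is to exploit the minorization condition on $D$ to decompose the transition kernel into an ``atomic'' part proportional to $q$ and a residual sub-Markov part, and then show that the atomic contribution to $f_1-f$ vanishes on $D$ by virtue of the mass-matching hypothesis $\int_D f_0 = \int_D f$.

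Concretely, for $x\in D$ I would write
\begin{equation*}
\kappa(t,x) = \epsilon\, q(t) + (1-\epsilon)\,\tilde\kappa(t,x),\qquad \tilde\kappa(t,x)=\frac{\kappa(t,x)-\epsilon\,q(t)}{1-\epsilon},
\end{equation*}
and check that $\tilde\kappa(\cdot,x)$ remains a probability density in its first argument (non-negativity uses the minorization, and total mass one follows because $\int\kappa(t,x)\,dt=\int q(t)\,dt=1$). For $x\notin D$ I simply keep $\kappa(t,x)$ as is. Since $f$ is the equilibrium, $f(x)=\int\kappa(x,t)f(t)\,dt$, so
\begin{equation*}
f_1(x)-f(x) = \int_D \kappa(x,t)\bigl(f_0(t)-f(t)\bigr)\,dt + \int_{D^c}\kappa(x,t)\bigl(f_0(t)-f(t)\bigr)\,dt.
\end{equation*}

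The key step is to substitute the decomposition into the first integral. The $\epsilon q(x)$ term pulls out to give $\epsilon q(x)\int_D(f_0(t)-f(t))\,dt$, which is \emph{exactly zero} by the hypothesis $\int_D f_0=\int_D f$. What remains is
\begin{equation*}
f_1(x)-f(x) = (1-\epsilon)\int_D \tilde\kappa(x,t)\bigl(f_0(t)-f(t)\bigr)\,dt + \int_{D^c}\kappa(x,t)\bigl(f_0(t)-f(t)\bigr)\,dt.
\end{equation*}
Taking absolute values, integrating over $x$, and applying Fubini (using $\int\tilde\kappa(x,t)\,dx=1$ and $\int\kappa(x,t)\,dx=1$ for every fixed $t$) yields the claimed bound. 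For the case $D=\mathbb{R}$, $D^c$ drops out and the hypothesis $\int_D f_0=\int_D f=1$ is automatic, giving the stated contraction in total variation.

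The only delicate point I anticipate is verifying that $\tilde\kappa(\cdot,x)$ is indeed a valid density and that $\tilde\kappa(x,t)$ integrates to one in $x$ for fixed $t$ (the ``swapped'' Fubini direction), since the minorization is stated pointwise in $t$ for $x\in D$ but Fubini needs integrability in $x$. This is handled by noting that $\int_{x\in D}\tilde\kappa(x,t)\,dx \le (1-\epsilon)^{-1}\int\kappa(x,t)\,dx$, which is finite because $\kappa$ is the density of a proper transition law; combined with non-negativity this suffices to justify the exchange. Otherwise the argument is bookkeeping, and the whole proof reduces to the minorization-plus-mass-balance cancellation described above.
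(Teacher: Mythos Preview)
Your proposal is correct and follows essentially the same route as the paper's proof: write $f_1-f$ via the equilibrium identity, split the $t$-integral over $D$ and $D^c$, decompose $\kappa(x,t)=\epsilon q(x)+(1-\epsilon)\tilde\kappa(x,t)$ for $t\in D$ using the minorization, kill the $\epsilon q(x)$ term by the mass-matching hypothesis, and then apply Fubini to each piece. Your worry about the ``swapped'' Fubini direction is unnecessary---for $t\in D$ the residual $\tilde\kappa(\cdot,t)$ is by construction a bona fide probability density in the destination variable $x$, so $\int\tilde\kappa(x,t)\,dx=1$ holds directly and Tonelli applies without further work.
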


\begin{proof}[\textbf{Proof of Lemma \ref{lemma:2.2}}]
  Because  $f(x)$ is the equilibrium distribution of the Gibbs sampler, we have
  \begin{align*}
    f(x) = \int \kappa(x,t)f(t)dt.
  \end{align*}
Therefore, 
\begin{align*}
\int |f_1(x) - f(x)| &= \int \bigg|\int \kappa(x,t)f_0(t)dt - \int \kappa(x,t)f(t)dt\bigg|dx\\
 &=\int \bigg|\int \kappa(x,t)\bigg\{f_0(t) - f(t)\bigg\}dt\bigg|dx\\
 &=\int \bigg|\int_D \kappa(x,t)\bigg\{f_0(t) - f(t)\bigg\}dt\bigg|dx + \int\bigg|\int_{D^c} \kappa(x,t)\bigg\{f_0(t) - f(t)\bigg\}dt \bigg|dx.
\end{align*}
For the second term we have
\begin{align*}
  \int\bigg|\int_{D^c} \kappa(x,t)\bigg\{f_0(t) - f(t)\bigg\}dt \bigg|dx \leq \int \int_{D^c} \kappa(x,t) |f_0(t) - f(t)|dtdx = \int_{D^c} |f_0(t) - f(t)|dt,
\end{align*}
where the last equality is due to Fubini's Theorem. For the first term, notice that $\kappa(x,t) = \epsilon q(x) + (1-\epsilon)\frac{\kappa(x,t) -\epsilon q(x)}{1-\epsilon}$, where 
\begin{align*}
\frac{\kappa(x,t) -\epsilon q(x)}{1-\epsilon}>0\mbox{ and }
\int \frac{\kappa(x,t) -\epsilon q(x)}{1-\epsilon} dx = 1
\end{align*}
for $t\in D$. As a result, we have
\begin{align*}
\int \bigg|\int_D& \kappa(x,t)\bigg\{f_0(t) - f(t)\bigg\}dt\bigg|dx \\
&= \int \bigg|\epsilon\int_D q(x)\bigg\{f_0(t) - f(t)\bigg\}dt + (1-\epsilon)\int_D\frac{\kappa(x,t) - \epsilon q(x)}{1 - \epsilon}\bigg\{f_0(t) - f(t)\bigg\}dt\bigg|dx\\
&= \int \bigg|(1-\epsilon)\int_D\frac{\kappa(x,t) - \epsilon q(x)}{1 - \epsilon}\bigg\{f_0(t) - f(t)\bigg\}dt\bigg|dx\\
&\leq (1-\epsilon) \int\int_D\frac{\kappa(x,t) - \epsilon q(x)}{1 - \epsilon}|f_0(t) - f(t)|dtdx\\
&= (1 - \epsilon)\int_D|f_0(t) - f(t)|dt.
\end{align*}
Consequently, we have,
\begin{align*}
  \int |f_1(x) - f(x)| \leq (1 - \epsilon)\int_D|f_0(x) - f(x)|dx + \int_{D^c}|f_0(x) - f(x)|dx,
\end{align*}
which completes the proof.
\end{proof}

\begin{lemma}\label{lemma:2.3}
  Considering the following Gibbs sampler,
  \begin{align*}
    \theta|t_i &\sim \prod_{i = 1}^m K_h(\theta - t_i)\\
    t_i|\theta &\sim K_h(\theta - t_i)f_i(t_i),
  \end{align*}
which defines a transition kernel on $\theta$ as $\kappa(\cdot,\theta)$. If the kernel $K$ is fully supported on $\mathcal{R}$, then for any bounded measurable set $D$, there exists an $\epsilon>0$ and a probability density $q(\theta)$ such that
\begin{align*}
  \kappa(\theta, \theta_0) > \epsilon q(\theta)
\end{align*}
for any $\theta_{0} \in D$. Furthermore, if the condition \eqref{eq:thm2} is satisfied, i.e.,
\begin{align*}
  \lim_{\theta\rightarrow\infty} \inf\frac{K(\theta - t)}{W_h^{(K)}f_i(\theta)} > 0
\end{align*}
for any $t\in\mathcal{R}$ and $i\in\{1,2,\cdots,m\}$, then the set $D$ can be taken as $\mathcal{R}$.
\end{lemma}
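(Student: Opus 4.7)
The plan is to make the one-step transition density $\kappa(\theta,\theta_0)$ explicit by marginalizing the latent variables $t_1,\ldots,t_m$, then derive the minorization by restricting the $t$-integration to a well-chosen bounded set and lower-bounding the integrand uniformly in $\theta_0$.

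The Gibbs updates yield
\begin{align*}
\kappa(\theta,\theta_0)=\int_{\mathcal{R}^m}\frac{\prod_{i=1}^m K_h(\theta-t_i)}{Z(t)}\prod_{i=1}^m\frac{K_h(\theta_0-t_i)f_i(t_i)}{W_h^{(K)}f_i(\theta_0)}\,dt_1\cdots dt_m,
\end{align*}
where $Z(t)=\int\prod_i K_h(\theta'-t_i)\,d\theta'$. I would pick a bounded set $B\subset\mathcal{R}$ large enough that $\int_B f_i(t)\,dt\geq 1/2$ for every $i$, and restrict the outer integration to $B^m$. Factoring one kernel out of the product and using $K_h\leq h^{-1}\|K\|_\infty$ gives $Z(t)\leq (h^{-1}\|K\|_\infty)^{m-1}$ uniformly on $B^m$; likewise $W_h^{(K)}f_i(\theta_0)\leq h^{-1}\|K\|_\infty$ uniformly in $\theta_0$.

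For the first claim, with $D$ bounded, the set $\{\theta_0-t_i:\theta_0\in D,\,t_i\in B\}$ is compact, so by continuity and strict positivity of $K$ one obtains $K_h(\theta_0-t_i)\geq\delta_0>0$ uniformly for $\theta_0\in D$ and $t_i\in B$. Collecting the bounds yields
\begin{align*}
\kappa(\theta,\theta_0)\geq \epsilon_0\int_{B^m}\prod_{i=1}^m K_h(\theta-t_i)f_i(t_i)\,dt
\end{align*}
for a positive constant $\epsilon_0$ independent of $\theta_0$. The right-hand side is a positive, integrable function of $\theta$, and normalizing it defines the probability density $q(\theta)$, with the collected constants playing the role of $\epsilon$.

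For the second claim, $\theta_0$ ranges over all of $\mathcal{R}$, so we cannot treat $K_h(\theta_0-t_i)$ and $W_h^{(K)}f_i(\theta_0)$ separately. Instead I would keep the full ratio $K_h(\theta_0-t_i)/W_h^{(K)}f_i(\theta_0)$ that already appears in the integrand: condition \eqref{eq:thm2} gives a positive limit inferior of this ratio at $\pm\infty$ for each fixed $t$, which combined with continuity and strict positivity of the ratio in $(\theta_0,t_i)$ on the compact set $B$ delivers a uniform lower bound $\delta_1>0$ valid for all $t_i\in B$ and all $\theta_0\in\mathcal{R}$. This uniform upgrade from the pointwise-in-$t$ tail bound given by \eqref{eq:thm2} to a uniform-in-$t\in B$ bound is the main obstacle; it is immediate for the Gaussian kernel, where the ratio is known explicitly, but in general requires a careful continuity argument on compacta. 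Once $\delta_1$ is in hand, the same bookkeeping as in the first claim gives the minorization with $D=\mathcal{R}$.
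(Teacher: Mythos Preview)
Your approach to the first claim (bounded $D$) is correct, though it imposes the extra hypothesis that $K$ is bounded in order to control $Z(t)$ and $W_h^{(K)}f_i(\theta_0)$ from above. The paper's argument avoids this by a cleaner factorization of the full transition density:
\[
g(\theta,t\mid\theta_0)=\underbrace{\frac{\prod_i K_h(\theta-t_i)}{Z(t)}\prod_i f_i(t_i)}_{=:q(\theta,t)}\;\cdot\;\prod_{i=1}^m\frac{K_h(\theta_0-t_i)}{W_h^{(K)}f_i(\theta_0)}.
\]
The first factor $q(\theta,t)$ is already a proper joint density on $(\theta,t_1,\ldots,t_m)$ that does not depend on $\theta_0$, so there is no need to upper-bound $Z(t)$ or $W_h^{(K)}f_i$, nor to truncate the $t$-integration to a box $B^m$.

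This factorization also dissolves the obstacle you flagged for the second claim. With $q(\theta,t)$ isolated, one sets
\[
w_D(t)=\inf_{\theta_0\in D}\prod_{i=1}^m\frac{K_h(\theta_0-t_i)}{W_h^{(K)}f_i(\theta_0)},
\]
and defines $\epsilon=\int\!\!\int q(\theta,t)\,w_D(t)\,dt\,d\theta$ and $q(\theta)=\epsilon^{-1}\int q(\theta,t)\,w_D(t)\,dt$. The crucial point is that $w_D(t)$ need only be \emph{pointwise} positive in $t$ for $\epsilon>0$; no uniform lower bound over $t\in B$ is required. For bounded $D$, positivity of $w_D(t)$ is immediate from continuity of $K$ on the compact $D$. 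For $D=\mathcal{R}$, condition \eqref{eq:thm2} says precisely that for each fixed $t_i$ the ratio $K_h(\theta_0-t_i)/W_h^{(K)}f_i(\theta_0)$ has positive $\liminf$ at $\pm\infty$; combined with continuity and strict positivity on compacta this gives $\inf_{\theta_0\in\mathcal{R}}$ of each factor $>0$, hence $w_{\mathcal{R}}(t)\geq\prod_i\inf_{\theta_0}(\cdot)>0$. So the ``uniform upgrade'' you were worried about is unnecessary: the self-imposed restriction to $B^m$ is what forced you toward a uniform-in-$t$ bound, and dropping it removes the difficulty entirely.
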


\begin{proof}[\textbf{Proof of Lemma \ref{lemma:2.3}}]
  Consider the full Gibbs transition kernel $g(\theta,t_i, i=1,2,\cdots,m|\theta_0)$,
  \begin{align*}
    g(\theta, t|\theta_0) &= \frac{\prod_{i=1}^m K_h(\theta - t_i)}{\int\prod_{i=1}^m K_h(\theta - t_i)d\theta}\cdot\prod_{i=1}^m \frac{K(\theta_0 - t_i)f_i(t_i)}{\int K(\theta_0 - s)f_i(s)ds}\\
&=\frac{\prod_{i=1}^m K_h(\theta - t_i)}{\int\prod_{i=1}^m K_h(\theta - t_i)d\theta}\prod_{i=1}^m f_i(t_i)\cdot\prod_{i=1}^m \frac{K(\theta_0 - t_i)}{\int K(\theta_0 - s)f_i(s)ds}\\
& = q(\theta, t) \cdot \prod_{i=1}^m \frac{K(\theta_0 - t_i)}{\int K(\theta_0 - s)f_i(s)ds},
  \end{align*}
where $q(\theta, t)$ is a probability density over $\theta, t_1,\cdots,t_m$, and define
\begin{align*}
  w_D(t) = w_D(t_1,\cdots, t_m) = \min_{\theta_0\in D}\prod_{i=1}^m \frac{K(\theta_0 - t_i)}{\int K(\theta_0 - s)f_i(s)ds},
\end{align*}
which is strictly greater than 0 for any given $t_1,t_2,\cdots,t_m$ because $D$ is bounded and $K$ is strictly positive on $\mathcal{R}$. Consequently,
\begin{align*}
  g(\theta, t|\theta_0) \geq q(\theta, t) w_D(t).
\end{align*}
Let $\epsilon = \int\int q(\theta, t)w_D(t)dtd\theta$ and $q(\theta) = \epsilon^{-1}\int q(\theta, t)w_D(t) dt$. Because $w_D(t)$ is strictly positive on $\mathcal{R}$, thus we have $\epsilon > 0$. Apparently $q(\theta)$ is a probability density satisfying that,
\begin{align*}
  \kappa(\theta, \theta_0) = \int q(\theta, t|\theta_0)dt \geq \epsilon q(\theta)
\end{align*}
for any $\theta_0\in D$.
\par
Now if the condition \eqref{eq:thm2} is also satisfied, we have that
\begin{align*}
  w_R(t) = \min_{\theta \in \mathcal{R}}\prod_{i=1}^m \frac{K(\theta_0 - t_i)}{\int K(\theta_0 - s)f_i(s)ds} = \min_{\theta\in\mathcal{R}}\prod_{i=1}^m \frac{K(\theta_0 - t_i)}{W_h^{(K)}f_i(\theta_0)} > 0
\end{align*}
for any give $t_1,t_2,\cdots,t_m$, which ensures that $D$ can be chosen as $\mathcal{R}$.
\end{proof}

Theorem \ref{thm2} is a straightforward result of the above three lemmas, of which the proof is briefly described below,
\begin{proof}[\textbf{Proof of Theorem \ref{thm2}}] With Lemma \ref{lemma:2.1} we are guaranteed the existence of a set $D$ which is bounded and satisfies that
  \begin{align*}
    \int_D f(\theta)d\theta > p_0.
  \end{align*}
Now Lemma \ref{lemma:2.1} ensures the existence of a density $q(\theta)$ such that the transition kernel defined in \eqref{eq:t} and \eqref{eq:x} (with general kernel $K$) $\kappa(\cdot, \theta)$ satisfies that
\begin{align*}
  \kappa(\theta, \theta_0) \geq \epsilon q(\theta)
\end{align*}
for any $\theta_0\in D$ or for any $\theta_0\in\mathcal{R}$ if the condition \eqref{thm2} is also satisfied. Combining these facts with Lemma 4, we have the result listed in Theorem \ref{thm2}.

\end{proof}
\renewcommand{\baselinestretch}{1}
\normalsize
\bibliographystyle{apalike}

\begin{thebibliography}{}

\bibitem[Agarwal and Duchi(2012)]{Agarwal:Duchi:2012}
Agarwal, A. and Duchi, J.C. (2012).  Distributed delayed stochastic optimization, 
{\em Decision and Control (CDC), 2012 IEEE 51st Annual Conference on, IEEE}, pp. 5451-5452.

\bibitem[Ahn et al.(2012)]{Ahn:etal:2012}
Ahn, S., Korattikara, A. and Welling, M. (2012). Bayesian posterior sampling via stochastic gradient fisher scoring.
{\em Proceedings of the 29th International Conference on Machine Learning}, pp. 1591-1598.

\bibitem[Bache and Lichman(2013)]{Bache:Lichman:2013}
Bache, K. and Lichman, M. (2013). {\em UCI Machine Learning Repository [http://archive.ics.uci.edu/ml]}. 
Irvine, CA: University of California, School of Information and Computer Science.


\bibitem[Bailer-Jones and Smith(2011)]{Bailer:Jones:2011}
Bailer-Jones, C. and Smith, K. (2011). Combining probabilities.
{\em Data Processing and Analysis Consortium (DPAS)}, GAIA-C8-TN-MPIA-CBJ-053.

\bibitem[Earl and Deem(2005)]{Earl:Deem:2005}
Earl, D. J. and Deem, M. W. (2005) Parallel tempering: Theory, applications, and new perspectives.
{\em Physical Chemistry Chemical Physics}, 7, 3910

\bibitem[Fukunaga(1972)]{Fukunaga:1972}
Fukunaga, K. (1972)  Introduction to Statistical Pattern Recognition. Electrical Science Series. Academic Press, San Diego.

\bibitem[Green(1995)]{Green:1995}
Green, P. J. (1995). Reversible jump Markov chain Monte Carlo computation and Bayesian model determination. 
{\em Biometrika}, 82, 711-732.

\bibitem[Jasra et al.(2005)]{Jasra:etal:2005}
Jasra, A., Holmes, C. C. and Stephens, D. A. (2005) Markov chain Monte Carlo methods and the label switching
problem in Bayesian mixture modelling. 
{\em Statistal Science}, 20, 50-67.

\bibitem[Johnson(2009)]{Johnson:2009}
Johnson, A. A. (2009). Geometric Ergodicity of Gibbs Samplers PhD thesis, University of Minnesota, School of Statistics.


\bibitem[Liu and Ihler(2012)]{Liu:Ihler:2012}
Liu, Q. and Ihler, A. (2012). Distributed Parameter Estimation via Pseudo-likelihood,
{\em International Conference on Machine learning (ICML)}, June 2012

\bibitem[Neal(1993)]{Neal:1993}
Neal, R. M. (1993) Probabilistic inference using Markov Chain Monte Carlo Methods.
{\em Technical Report}.
University of Toronto, Toronto. (Available from http://www.cs.toronto.edu/$\sim$radford/review.
abstract.html.)

\bibitem[Neal(2010)]{Neal:2010}
Neal, R. M. (2010). MCMC using Hamiltonian dynamics. In 
{\em Handbook of Markov Chain Monte Carlo} (eds
S. Brooks, A. Gelman, G. Jones and X.-L Meng). Boca Raton: Chapman and Hall-CRC Press.

\bibitem[Neiswanger et al.(2013)]{Neis:etal:2013}
Neiswanger, W., Wang, C. and Xing, E. (2013). Asymptotically Exact, Embarrassingly Parallel MCMC, arXiv:1311.4780.

\bibitem[Polson et al.(2013)]{Polson:etal:2013}
Polson, N. G., Scott, J. G. and Windle, J. (2013). Bayesian inference for logistic models using Polya-Gamma latent variables.
{\em Journal of the American Statistical Association}, pp. 1339-1349

\bibitem[Ripley(1987)]{Ripley:1987}
Ripley, B. D. (1987). {\em Stochastic Simulation}, Wiley \& Sons.

\bibitem[Scott et al.(2013)]{Scott:Blocker:Bonassi:2013}
Scott, S. L., Blocker, A. W., and Bonassi, F. V. (2013). Bayes and big data: The consensus Monte
Carlo algorithm.
{\em Bayes 250}.

\bibitem[Smola and Narayanamurthy(2010)]{Smola:Naray:2010}
Smola, A. and Narayanamurthy, S. (2010). An architecture for parallel topic models, 
{\em Proceedings of
the VLDB Endowment}, 3, no. 1-2, 703-710.

\bibitem[Van der Vaart(1998)]{Van:1998}
Van der Vaart A. W. (1998). {\em Asymptotic Statistics}. Cambridge: Cambridge University Press.

\bibitem[Weierstrass(1885)]{Weier:1885}
Weierstrass, K. (1885). \"{U}ber die analytische Darstellbarkeit sogenannter willkürlicher Functionen einer reellen Veränderlichen.
{\em Sitzungsberichte der K\"{o}niglich Preußischen Akademie der Wissenschaften zu Berlin}, (II). Erste Mitteilung (part 1) pp. 633–639, Zweite Mitteilung (part 2) pp. 789-805.

\bibitem[Welling and Teh(2011)]{Welling:Teh:2011}
Welling, M. and Teh, Y. (2011). Bayesian learning via stochastic gradient Langevin dynamics, {\em Proceedings of
the 28th International Conference on Machine Learning (ICML)}, pp. 681-688.

\bibitem[White et al. (2013)]{White:etal:2013}
White, S.R., Kypraios, T., and Preston, S.P. (2013). Piecewise approximate Bayesian computation: fast inference for discretely observed Markov models using a factorised posterior distribution. {\em Statistics and Computing}, 1-13.
\end{thebibliography}

\end{document}